\documentclass[english]{llncs}
\usepackage[T1]{fontenc}
\usepackage[latin9]{inputenc}
\usepackage{color}
\usepackage{verbatim}
\usepackage{amsmath}
\usepackage{amssymb}
\usepackage{url}

\makeatletter
\usepackage[textwidth=20mm]{todonotes}

\makeatother

\usepackage{babel}
\usetikzlibrary{patterns}
\renewcommand{\epsilon}{\varepsilon}
\newcommand{\restateclaim}[3]{\noindent\textbf{#1 #2.}\, #3}
\newenvironment{restate}[3]{\restateclaim{#1}{#2}{#3} \vspace{1mm}\begin{proof}}%
{\qed\end{proof}\vspace{1mm}}

\global\long\def\P{P}
\global\long\def\PP{\mathcal{P}}
\global\long\def\slp{\mathrm{sLP}}
\global\long\def\ufpc{\mathrm{UFPcover}}
\global\long\def\T{\mathcal{T}}
\global\long\def\ct{c_{\mathrm{thres}}}
\global\long\def\reldate{{(1+\epsilon)}^{\left\lfloor \log_{1+\epsilon} \epsilon \cdot p_j/(1+\epsilon) \right\rfloor }}
\DeclareMathOperator{\OPT}{OPT}
\DeclareMathOperator{\opt}{opt}
\DeclareMathOperator{\size}{\sf size}
\DeclareMathOperator{\beg}{\sf begin}
\DeclareMathOperator{\rem}{\sf rem}
\DeclareMathOperator{\poly}{poly}
\DeclareMathOperator{\ex}{ex}

\begin{document}

\title{How Unsplittable-Flow-Covering helps 
\\Scheduling with Job-Dependent 
Cost Functions\thanks{%
Funded by the Go8-DAAD joint research cooperation scheme.}}
\author{Wiebke H\"ohn\inst{1}
\and Juli\'an Mestre\inst{2} \and Andreas Wiese\inst{3}}
\institute{Technische Universit{\"a}t Berlin, Germany.  \email{hoehn@math.tu-berlin.de}
\and The University of Sydney, Australia.  \email{mestre@it.usyd.edu.au}
\and Max-Planck-Institut f\"ur Informatik, Saarb\"ucken, Germany. \email{awiese@mpi-inf.mpg.de}}
\maketitle

%

\begin{abstract}
Generalizing many well-known and natural scheduling problems, scheduling with
job-specific cost functions has gained a lot of attention recently. In this
setting, each job incurs a cost depending on its completion time, given by a
private cost function, and one seeks to schedule the jobs to minimize the total
sum of these costs. The framework captures many important scheduling objectives
such as weighted flow time or weighted tardiness. Still, the general case as
well as the mentioned special cases are far from being very well understood
yet, even for only one machine. Aiming for better general understanding of this
problem,  in this paper we focus on the case of  uniform job release dates on
one machine for which the state of the art is a 4-approximation algorithm. This
is true even for a special case that is equivalent to the covering version of
the well-studied and prominent unsplittable flow on a path problem, which is
interesting in its own right. For that covering problem, we present a quasi-polynomial
time $(1+\epsilon)$-approximation algorithm that yields an
$(e+\epsilon)$-approximation for the above scheduling problem. 
Moreover, for the latter we devise the best possible resource augmentation
result regarding speed: a polynomial time algorithm which computes a solution
with \emph{optimal }cost at $1+\epsilon$ speedup. Finally, we present an
elegant QPTAS for the special case where the cost functions of the jobs fall
into at most $\log n$ many classes. This algorithm allows the jobs even to have
up to $\log n$ many distinct release dates.
%
\end{abstract}

\section{Introduction}

In scheduling, a natural way to evaluate the quality of a computed
solution is to assign a cost to each job which depends on its completion
time. The goal is then to minimize the sum of these costs. The function
describing this dependence may be completely different for each job.
There are many
well-studied and important scheduling objectives which can be cast
in this framework. Some of them are already very well understood,
for instance weighted sum of completion times $\sum_{j}w_{j}C_{j}$
for which there are polynomial time approximation schemes (PTASs)~\cite{PTAS-scheduling},
even for multiple machines and very general machine models. On the
other hand, for natural and important objectives such as weighted flow
time or weighted tardiness, not even a constant factor polynomial time
approximation algorithm is known, even on a single machine. In a recent
break-through result, Bansal and Pruhs presented a $O(\log\log P)$-approximation
algorithm~\cite{bansal2010geometry,bansal2012weighted}
for the single machine case where every job has its private cost function.
Formally, they study the General Scheduling Problem (GSP) where the
input consists of a set of jobs $J$ where each job $j\in J$ is specified
by a processing time $p_{j}$, a release date~$r_{j}$, and a
non-decreasing
cost function $f_{j}$, and the goal is to compute a preemptive schedule
on one machine which  minimizes $\sum_{j}f_{j}(C_{j})$
where $C_{j}$ denotes the completion time of job~$j$ in the computed
schedule. Interestingly, even though this problem is very general,
subsuming all the objectives listed above, the best known complexity
result for it is only strong $\mathsf{NP}$-hardness, so there might
even be a polynomial time $(1+\epsilon)$-approximation. 

Aiming to better understand GSP, in this paper we investigate the special case
that all jobs are released at time 0. This case is still strongly
$\mathsf{NP}$-hard~\cite{lawler77} and the currently best know approximation
algorithm for it is a $(4+\epsilon)$-approximation
algorithm~\cite{cheung2011primal,MestreVerschae2013}%
\footnote{In \cite{cheung2011primal} a prima-dual
$(2+\epsilon)$-approximation algorithm was claimed for this problem.
However, there is a error in the argumentation: there
are instances~\cite{MestreVerschae2013} where the algorithm constructs
a dual solution whose value differs from the optimal integral solution by
a factor of 4. %
}. As observed by Bansal and Verschae~\cite{BansalVerschaeObervation}, this problem is a generalization of the covering-version
of the well-studied Unsplittable Flow on a Path problem
(UFP)~\cite{amzingUFP2014,BCES2006,SODA-unsplit-flow,BSW11,CCGK2002,CMS07}.
The input of this problem consists of a path, each edge~$e$ having a
demand~$u_{e}$, and a set of tasks~$T$. Each task~$i$ is specified
by a start vertex $s_{i}$, an end vertex $t_{i}$, a size~$p_{i}$, and a
cost~$c_{i}$. In the covering version, the goal is to select a subset of
the tasks $T'\subseteq T$ which covers the demand profile, i.e., $\sum_{i\in T'\cap T_{e}}p_{i}\ge u_{e}$ where $T_{e}$ denotes all
tasks in~$T$ whose path uses~$e$. The objective is to minimize the
total cost $\sum_{i\in T'}c_{i}$.

This covering version of UFP has applications to resource allocation settings
such as workforce and energy management, making it an interesting problem in
its own right. For example, one can think of the tasks as representing time
intervals when employees are available, and one aims at providing certain service
level that changes over the day. UFP-cover is a generalization of the knapsack
cover problem~\cite{carr2000strengthening} and corresponds to instances of
GSP without release dates where the cost function of each job attains only the
values 0, some job-dependent value~$c_i$, and $\infty$.  The best known
approximation algorithm for UFP-cover is a
4-approximation~\cite{BBFNS2000journal,chakaravarthy2011resource}, which
essentially matches the best known result  for GSP without release dates.


\subsubsection{Our Contribution.}

In this paper we present several new approximation results for GSP without
release dates and some of its special cases. First, we give a
$(1+\epsilon)$-approximation algorithm for the covering version of UFP with
quasi-polynomial running time. 
Our algorithm follows the high-level idea of the known QPTAS for the
packing version~\cite{BCES2006}. Its key concept is to start with an edge in the middle and to
consider the tasks using it. One divides these tasks into groups, all tasks in
a group having roughly the same size and cost, and guesses for each group an
approximation of the capacity profile used by the tasks from that group. In the
packing version, one can show that by slightly underestimating the true profile
one still obtains almost the same profit as the optimum. For the covering
version, a natural adjustment would be to use an approximate profile which
\emph{over}estimates the true profile. However, when using only a polynomial
number of approximate profiles, it can happen that in the instance there are
simply not enough tasks from a group available so that one can cover the
overestimated profile which approximates the actual profile in the best
possible way.

We remedy this problem in a maybe counterintuitive fashion. Instead of
guessing an approximate upper bound of the true profile, we first
guess a \emph{lower} bound of it. Then we select tasks that cover
this lower bound, and finally add a small number of ``maximally long''
additional tasks.
Using this procedure, we cannot guarantee (instance-independently)
how much our selected tasks exceed the guessed profile on each edge.
However, we can guarantee that for the correctly guessed profile, we
cover at least as much as the optimum and pay only slightly more.
Together with the recursive framework from~\cite{BCES2006}, we obtain
a \mbox{QPTAS}. As an application, we use this algorithm to get a
quasi-polynomial time $(e  +\epsilon)$-approximation algorithm for GSP
with uniform release dates, improving the approximation ratio of the best
known polynomial time 4-approximation algorithm~\cite{cheung2011primal,MestreVerschae2013}.

Moreover, we consider a different way to relax the problem. Rather than
sacrificing a $1+\epsilon$ factor in the objective value, we present
a polynomial time algorithm that computes a solution with \emph{optimal}
cost but requiring a speedup of~$1+\epsilon$.
Such a result can be easily obtained for job-\emph{independent},
scalable cost functions using the PTAS in~\cite{MegowVerschae2013}
(a cost function $f$ is scalable if $f(c\,t)=\phi(c)\,f(t)$ for some suitable
function $\phi$ and all all $c,t\ge 0$).
In our case, however, the cost functions of the jobs can be much more
complicated and, even worse, they can be different for each job. Our
algorithm first imposes some simplification
on the solutions under consideration, at the cost of a $(1+\epsilon)$-speedup.
Then, we use a recently introduced technique to first guess 
a set of discrete intervals representing slots for large jobs and
then use a linear program to simultaneously assign large jobs into these
slots and small jobs into the remaining idle times~\cite{sviridenko2013approximating}.

An interesting open question is to design a (Q)PTAS for GSP without
release dates. As a first step towards this goal, recently Megow and
Verschae~\cite{MegowVerschae2013} presented a PTAS for minimizing
the objective function $\sum_{j}w_{j}g(C_{j})$ where each job~$j$ has
a private weight~$w_{j}$ but the function $g$ is identical for all jobs. In
Section~\ref{sec:few-classes} we present a QPTAS for a generalization
of this setting. Instead of only one function $g$ for all jobs, we allow up
to $(\log n)^{O(1)}$ such functions, each job using one of them, and we
even allow the jobs to have up to $(\log n)^{O(1)}$ distinct release dates.
Despite the fact that this setting is much more general, our algorithm is
very clean and easy to analyze. 


\subsubsection{Related Work.}

As mentioned above, Bansal and Pruhs present a $O(\log\log P)$-approximation
algorithm for GSP~\cite{bansal2010geometry}. Even for some well-studied
special cases, this is now the best known polynomial time approximation
result. For instance, for the important weighted flow time objective,
previously the best known approximation factors were $O(\log^{2}P)$,
$O(\log W)$ and $O(\log nP)$~\cite{BansalDhamdhere2007,ChekuriKhannaZhu2001}, where $P$ and $W$ denote
the ranges of the job processing times and weights, respectively.
A QPTAS with running time $n^{O_{\epsilon}(\log P\log W)}$ is also
known~\cite{chekuri2002approximation}. For the objective of minimizing the weighted
sum of completion times, PTASs are known, even for an arbitrary number of
identical and a constant number of unrelated machines~\cite{PTAS-scheduling}.

For the case of GSP with identical release dates, Bansal and
Pruhs~\cite{bansal2010geometry} give a 16-approximation algorithm. Later,
Shmoys and Cheung claimed a primal-dual $(2+\epsilon)$-approximation
algorithm~\cite{cheung2011primal}. However, an instance was later found where
the algorithm constructs a dual solution which differs from the best integral
solution by a factor 4~\cite{MestreVerschae2013}, suggesting that the
primal-dual analysis can show only an approximation ratio of 4. On the other
hand, Mestre and Verschae~\cite{MestreVerschae2013} showed that the local-ratio
interpretation of that algorithm (recall the close relation between the
primal-dual schema and the local-ratio technique~\cite{bar2005equivalence}) is
in fact a  pseudopolynomial time 4-approximation, yielding a
$(4+\epsilon)$-approximation in polynomial time.

As mentioned above, a special case of GSP with uniform release dates
is a generalization for the covering version of Unsplittable Flow on a Path.
For this special case, a 4-approximation algorithm is
known~\cite{BBFNS2000journal,chakaravarthy2011resource}. The packing version
is very well studied. After a series of papers on the problem and its special
cases~\cite{SODA-unsplit-flow,BSW11,CCGK2002,CMS07}, the currently best
known approximation results are a \mbox{QPTAS}~\cite{BCES2006} and a
$(2+\epsilon)$-approximation in polynomial time~\cite{amzingUFP2014}.

\section{Quasi-PTAS for UFP-Cover}
\label{sec:qptas-ufp}

In this section, we present a quasi-polynomial time $(1+\epsilon)$-approximation
algorithm for the UFP-cover problem. Subsequently, we show how it
can be used to obtain an approximation algorithm with approximation
ratio $e + \epsilon \approx2.718 + \epsilon$ and quasi-polynomial running
time for GSP without release dates. Throughout this section, we
assume that the sizes of the tasks are quasi-polynomially bounded.
Our algorithm follows the structure from the QPTAS for the packing
version of Unsplittable Flow on a Path
due to Bansal et al.~\cite{BCES2006}. First, we describe a recursive
exact algorithm with exponential running time. Subsequently, we describe
how to turn this routine into an algorithm with only quasi-polynomial
running time and an approximation ratio of $1+\epsilon$. 

For computing the exact solution (in exponential time) one can use the
following recursive algorithm: Given the path $G=(V,E)$, denote by $e_{M}$ the
edge in the middle of $G$ and let $T_{M}$ denote the tasks that use $e_{M}$.
Our strategy is to ``guess'' which tasks in $T_{M}$ are contained in $\OPT$,
the (unknown) optimal solution. Note that once these tasks are chosen, the
remaining problem splits into the two independent subproblems given by the
edges on the left and on the right of $e_{M}$,  respectively, and the  tasks
whose paths are fully contained in them. Therefore, we enumerate all subsets of
$T'_{M}\subseteq T_{M}$, denote by $\T_{M}$ the resulting set of sets. For each
set $T'_{M}\in\T_{M}$ we recursively compute the optimal solution for the
subpaths $\{e_{1},...,e_{M-1}\}$ and $\{e_{M+1},...,e_{|E|}\}$, subject to the
tasks in $T'_{M}$ being already chosen and that no more tasks from $T_{M}$ are
allowed to be chosen. The leaf subproblems are given when the path in the
recursive call has only one edge. Since $|E|=O(n)$ this procedure has a
recursion depth of $O(\log n)$ which is helpful when aiming at quasi-polynomial
running time. However, since in each recursive step we try each set
$T'_{M}\in\T_{M}$, the running time is exponential (even in one single step of
the recursion). To remedy this issue, we will show that for any set $\T_{M}$
appearing in the recursive procedure there is a set $\bar{\T}_{M}$ which is of
small size and which approximates $\T_{M}$ well. More precisely, we can compute
$\bar{\T}_{M}$ in quasi-polynomial time (and it thus has only quasi-polynomial
size) and there is a set $T_{M}^{*}\in\bar{\T}_{M}$ such that
$c(T_{M}^{*})\le(1+\epsilon)\cdot c(T_{M}\cap \OPT)$ and $T_{M}^{*}$ dominates
$T_{M}\cap \OPT$. For any set of tasks $T'$ we write $c(T'):=\sum_{i\in T'}c_i$, and
for two sets of tasks
$T_{1},T_{2}$, we say that $T_{1}$
\emph{dominates $T_{2}$ }if $\sum_{i\in T_{1}\cap T_{e}}d_{i}\ge\sum_{i\in
T_{2}\cap T_{e}}d_{i}$ for each edge $e$.  We modify the above procedure such
that we do recurse on sets in $\bar{\T}_{M}$ instead of $\T_{M}$. Since
$\bar{\T}_{M}$ has quasi-polynomial size, $\bar{\T}_{M}$ contains the mentioned
set $T_{M}^{*}$, and the recursion depth is $O(\log n)$, the resulting
algorithm is a QPTAS. In the sequel, we describe the above algorithm in detail
and show in particular how to obtain the set $\bar{\T}_{M}$.

\subsection{Formal Description of the Algorithm}

We use a binary search procedure to guess the optimal objective
value~$B$. First, we reject all tasks~$i$ whose cost is larger than~$B$
and select all tasks $i$ whose cost is at most $\epsilon B/n$. The latter
cost at most $n\cdot\epsilon B/n\le\epsilon B$ and thus only a factor
$1+\epsilon$ in the approximation ratio. We update the demand profile
accordingly. 

We define a recursive procedure $\ufpc(E',T')$ which gets as input
a subpath $E'\subseteq E$ of $G$ and a set of already chosen tasks~$T'$.
Denote by $\bar{T}$ the set of all tasks $i\in T\setminus T'$ such that
the path of $i$ uses only edges in $E'$. The output of $\ufpc(E',T')$ is
a $(1+\epsilon)$-approximation to the minimum cost solution for the
subproblem of selecting a set of tasks $T''\subseteq\bar{T}$
such that $T'\cup T''$ satisfy all demands of the edges in $E'$,
i.e., $\sum_{i\in (T'\cup T'')\cap T_e}p_{i}\ge d_{e}$ for each edge
$e\in E'$. Note that there might be no feasible solution for this
subproblem in which case we output~$\infty$. Let $e_{M}$ be the
edge in the middle of $E'$, i.e., at most $|E'|/2$ edges are on
the left and on the right of $e_{M}$, respectively. Denote by $T_{M}\subseteq\bar{T}$
all tasks in $\bar{T}$ whose path uses $e_{M}$. As described above,
the key is now to construct the set $\bar{\T}_{M}$ with the above
properties. Given this set, we compute $\ufpc(E_{L}',T'\cup T'_{M})$
and $\ufpc(E_{R}',T'\cup T'_{M})$ for each set $T'_{M}\in\bar{\T}_{M}$,
where $E'_{L}$ and $E'_{R}$ denote the subpaths of $E'$ on the
left and on the right of $e_{M}$, respectivley. We output 
\[
\min_{T'_{M}\in\bar{\T}_{M}}c(T'_{M})+\ufpc(E_{L}',T'\cup T'_{M})+\ufpc(E_{R}',T'\cup T'_{M}).
\]
For computing the set $\bar{\T}_{M}$, we first group the tasks in
$T_{M}$ into $(\log n)^{O(1)}$ many groups, all tasks in a group
having roughly the same costs and sizes. Formally, for each pair
$(k,\ell)$, denoting (approximately) cost $(1+\epsilon)^{k}$ and
size~$(1+\epsilon)^{\ell}$, we~define
\[
T_{(k,\ell)}:=\{i\in T_M:(1+\epsilon)^{k}\le c_{i}<(1+\epsilon)^{k+1}\wedge(1+\epsilon)^{\ell}\le p_{i}<(1+\epsilon)^{\ell+1}\}.
\]
Since the sizes of the tasks are quasi-polynomially bounded
and we preprocessed the weights of the tasks, we have $(\log n)^{O(1)}$
non-empty groups.

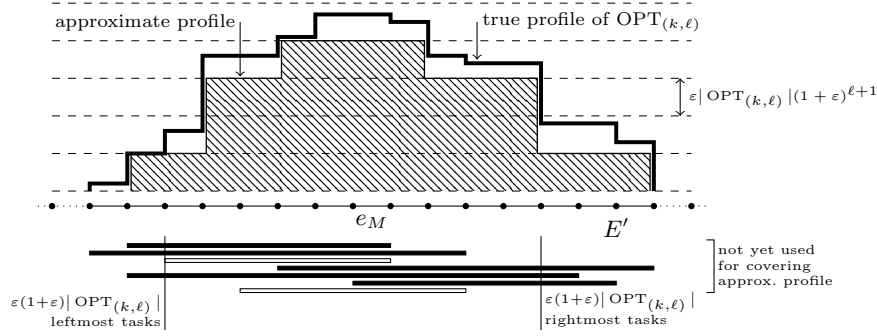
\begin{figure}[t]
\newcommand{\oneeps}{1.8}
\newcommand{\shift}{0.1}
\newcommand{\taskheight}{0.1}
\newcommand{\taskdiff}{0.2}
\hspace*{5mm}
\begin{tikzpicture}[scale=0.5,font=\scriptsize]
    \draw[line width=1.5pt]
        (0,1) -- ++(0,0.2) -- ++(1,0)
            -- ++(0,0.8)-- ++(1,0)
            -- ++(0,0.6) -- ++(1,0)
            -- ++(0,2) -- ++(2,0)
            -- ++(0,0.5) -- ++(1,0)
            -- ++(0,0.6) -- ++(2,0)
            -- ++(0,-0.2)
            -- ++(1,0) -- ++(0,-0.9)
            -- ++(1,0) -- ++(0,-0.2)
            -- ++(2,0) -- ++(0,-1.6)
            -- ++(2,0) -- ++(0,-0.5)
            -- ++(1,0) -- ++(0,-1.3);
    \draw[line width =0.5pt,pattern=north west lines]
            (1+\shift,1) -- ++(0,1)
            -- ++(2,0) -- ++(0,2)
            -- ++(2,0) -- ++(0,1)
            -- ++(4-2*\shift,0) -- ++(0,-1)
            -- ++(3,0) -- ++(0,-2)
            -- ++(3,0) -- ++(0,-1);
    \begin{scope}[yshift=-1.7cm]
        \draw (2,4*\taskdiff) rectangle ++(6,\taskheight)
                (4,0*\taskdiff) rectangle ++(6,\taskheight);
        \filldraw (0,5*\taskdiff) rectangle ++(10,\taskheight)
                (1,2*\taskdiff) rectangle ++(12,\taskheight)
                (5,3*\taskdiff) rectangle ++(10,\taskheight)
                (7,1*\taskdiff) rectangle ++(7,\taskheight)
                (1,6*\taskdiff) rectangle ++(7,\taskheight);
        \draw[font=\tiny, outer sep=0pt, inner sep=1.5pt, text width=2cm]
                (2,-1)
                node[anchor=south east, align=right]{$\epsilon(1+\epsilon)\big|\OPT_{(k,\ell)}\big|$ leftmost tasks~}
                -- ++(0,7*\taskdiff+1.1) 
                (12,-1)
                node[anchor=south west]{$\epsilon(1+\epsilon)\big|\OPT_{(k,\ell)}\big|$ rightmost tasks}
                -- ++(0,7*\taskdiff+1.1)
                (16.4,0) -- ++(0.2,0) --
                node[anchor=west, text width=1.7cm] {not yet used for covering  approx.\ profile}
                ++(0,7*\taskdiff)-- ++(-0.2,0);
    \end{scope}
    \foreach \y in {1,...,6}%
        {\draw[dashed, line width =0.01pt] (-1,\y) -- ++(17,0);}
     \draw[<->] (15.7,3) --
        node[anchor=west, font=\tiny] {$\epsilon\big|\OPT_{(k,\ell)}\big|(1+\epsilon)^{\ell+1}$} ++(0,1);
    \newcommand{\pathy}{0.6}
    \draw[dotted] (0,\pathy) -- ++(-1.5,0)
                        (15,\pathy) -- ++(1.5,0);
    \foreach \x in {0,...,14}
        {\draw (\x,\pathy) -- ++(1,0);}
    \foreach \x in {-1,...,16}
        {\filldraw (\x,\pathy) circle (2pt);}        
    \path[anchor=north, font=\footnotesize]
        (7.5,\pathy) node{$e_M$}
        (14,\pathy - 0.1) node{$E'$};
     \draw[<-]  (10.3,4.5) -- ++(0,1)  node[anchor=west, outer sep=1pt, inner sep=1pt, text width=3.2cm]
         {true profile  of $\OPT_{(k,\ell)}$};
     \draw[<-]  (4,4.05) -- ++(0,1.45)  node[anchor=east, outer sep=1pt, inner sep=1pt] {approximate profile};
\end{tikzpicture}
\caption{Construction from Lemma~\ref{lem:approx-group-sets}.}
\label{fig:step-profile}
\end{figure}

For each group $T_{(k,\ell)}$, we compute a set $\bar{\T}_{(k,\ell)}$
containing at least one set which is not much more expensive than
$\OPT_{(k,\ell)}:=\OPT\cap T_{(k,\ell)}$ and which dominates $\OPT_{(k,\ell)}$.
To this end, observe that the sizes of the tasks in $\OPT_{(k,\ell)}$ cover a
certain profile (see Figure~\ref{fig:step-profile}). Initially, we guess the
number of tasks in~$\OPT_{(k,\ell)}$, and if $|\OPT_{(k,\ell)}|\le
\tfrac{1}{\epsilon^2}$  then we simply enumerate all subsets of $T_{(k,\ell)}$
with at most $\tfrac{1}{\epsilon^2}$ tasks. Otherwise, we consider a polynomial
number of profiles that are potential approximations of the true profile
covered by~$\OPT_{(k,\ell)}$. To this end, we subdivide the (implicitly)
guessed height of the true profile evenly into~$\tfrac{1}{\epsilon}$ steps of
uniform height, and we allow the approximate profiles to use only those heights
while being monotonously increasing and decreasing before and  after $e_M$,
respectively (observe that also $\OPT_{(k,\ell)}$ has this property since all
its tasks use $e_M$). This leads to at most~$n^{O(1/\epsilon)}$ different
approximate profiles in total.

For each approximate profile we compute a set of tasks covering it
using LP-rounding.
The path of any task in $T_{(k,\ell)}$ contains the edge~$e_M$, and hence,
a task covering an edge~$e$ always covers all edges inbetween~$e$
and~$e_M$ as well.  Thus, when formulating the problem as an LP, it
suffices to introduce one constraint for the leftmost and one constraint for
the rightmost edge of each height in the approximated  profile. We compute
an extreme point solution of the LP and round up each of the at most
$\tfrac{2}{\epsilon}$ fractional variables. Since  $|\OPT_{(k,\ell)}|\ge \tfrac{1}{\epsilon^2}$ 
this increases the cost at most a factor $1+O(\epsilon)$ compared to the cost of the LP.


It is clear that the LP has a solution if the approximate profile is dominated
by the true profile. Among such approximate profiles, consider the one that
is closest to the latter. On each edge it would be sufficient to 
add $O(\epsilon\cdot\,\big|\OPT_{(k,\ell)}\big|)$ tasks from $T_{(k,\ell)}$ in
order to  close the remaining gap. This is due to our choice of the step size
of the approximate profile and the fact that all tasks in $T_{(k,\ell)}$ have
roughly the same size.  To this end, from the not yet selected tasks
in $T_{(k,\ell)}$  we add the $O(\epsilon\cdot |\OPT_{(k,\ell)}\big|)$ tasks with
the leftmost start vertex and the  $O(\epsilon\cdot |\OPT_{(k,\ell)}\big|)$ tasks
with the rightmost end vertex (see Figure~\ref{fig:step-profile}).
This costs again at most an $O(\epsilon)$-fraction of the cost so far. As a result,
on each edge $e$ we have either selected $O(\epsilon\cdot\,\big|\OPT_{(k,\ell)}\big|)$
additional tasks using it, thus closing the remaining gap, or
we have selected \emph{all} tasks from $T_{(k,\ell)}$ using $e$. In either case, 
the selected tasks dominate the tasks in $\OPT_{(k,\ell)}$, i.e., the true profile.
The above procedure is described in detail in Appendix~\ref{apx:QPTAS}. 
\newcommand{\lemapproxgroupsets}{%
Given a group $T_{(k,\ell)}$. There
is a polynomial time algorithm which computes a set of task sets \textup{$\bar{\T}_{(k,\ell)}$
}which contains a set $T_{(k,\ell)}^{*}\in\bar{\T}_{(k,\ell)}$ such
that $c(T_{(k,\ell)}^{*})\le(1+\epsilon)\cdot c(\OPT_{(k,\ell)})$
and \textup{$T_{(k,\ell)}^{*}$ }dominates $\OPT_{(k,\ell)}$.
}
\begin{lemma}
\label{lem:approx-group-sets}
\lemapproxgroupsets
\end{lemma}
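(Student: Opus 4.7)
The plan is to build $\bar{\T}_{(k,\ell)}$ by first guessing a few scalar parameters of $\OPT_{(k,\ell)}$, then enumerating a polynomial family of ``approximate profiles'' that should be thought of as lower envelopes for the true size-profile covered by $\OPT_{(k,\ell)}$, and finally producing, for each such profile, a single candidate task set via LP rounding plus a small ``padding'' of extremal tasks. First I would guess $N := \big|\OPT_{(k,\ell)}\big|$; there are at most $n$ values to try. If $N \le 1/\epsilon^2$, I put every subset of $T_{(k,\ell)}$ of cardinality at most $1/\epsilon^2$ into $\bar{\T}_{(k,\ell)}$, and $T_{(k,\ell)}^*=\OPT_{(k,\ell)}$ itself is then trivially contained.

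For the main case $N>1/\epsilon^2$, note that every task in $T_{(k,\ell)}$ crosses $e_M$, so the profile of any subset of $T_{(k,\ell)}$ is nonincreasing away from $e_M$ in both directions, and its maximum height is bounded by $H := N\cdot(1+\epsilon)^{\ell+1}$. I would enumerate all monotone step functions (rising toward $e_M$ from the left, falling after $e_M$ to the right) that take values only in $\{0, \epsilon H, 2\epsilon H,\ldots, H\}$; since a step function is determined by $1/\epsilon$ break points on each side, there are at most $n^{O(1/\epsilon)}$ such approximate profiles. For each candidate profile $\pi$ I would solve a covering LP over the tasks in $T_{(k,\ell)}$ with one constraint per height level for the leftmost and one for the rightmost edge at which $\pi$ attains that height (at most $2/\epsilon$ constraints total), take an extreme point solution, and round every fractional variable up; by extremality there are at most $2/\epsilon$ such variables. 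Because $N\ge 1/\epsilon^2$ and all costs in the group lie in $[(1+\epsilon)^k,(1+\epsilon)^{k+1})$, this rounding loses only a factor $1+O(\epsilon)$ against the LP optimum. Finally, to the rounded LP solution I append the $\lceil 2(1+\epsilon)\epsilon N\rceil$ not-yet-selected tasks of $T_{(k,\ell)}$ with the leftmost start vertex and the same number with the rightmost end vertex; this padding also costs only an $O(\epsilon)$-fraction extra.

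For correctness, let $\pi^*$ be the approximate profile that is coordinatewise largest among those dominated by the true profile of $\OPT_{(k,\ell)}$ (the zero profile qualifies, so $\pi^*$ exists). Then $\OPT_{(k,\ell)}$ is LP-feasible for $\pi^*$, so the LP value is at most $c(\OPT_{(k,\ell)})$, yielding the cost bound after rounding and padding. For domination, observe that on every edge the true profile exceeds $\pi^*$ by strictly less than one step, i.e.\ by at most $\epsilon H = \epsilon(1+\epsilon) N\,(1+\epsilon)^{\ell}$ in size, which corresponds to at most $O(\epsilon N)$ tasks of the group; by the monotonicity of both profiles around $e_M$, the leftmost-starting and rightmost-ending pad tasks exactly cover the edges where this gap can occur, so either the padding closes the gap on such an edge or we have already selected all of $T_{(k,\ell)} \cap T_e$. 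In both cases the output dominates $\OPT_{(k,\ell)}$ on $e$, and a final rescaling of $\epsilon$ absorbs the constant in the $O(\epsilon)$ losses.

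The main obstacle I expect is the domination argument on edges that do not appear as LP constraints: coverage of $\pi^*$ is enforced only at the two extreme edges per height level, yet we must certify domination on \emph{every} edge. The resolution relies on monotonicity of both the true profile of $\OPT_{(k,\ell)}$ and of $\pi^*$ around $e_M$, together with the fact that the padding consists of the extreme leftmost and rightmost tasks, so that any gap at an interior edge $e$ is witnessed by a gap at an extreme edge which has already been repaired by the LP and padding. Getting this bookkeeping tight enough to absorb the $(1+\epsilon)^{\ell+1}$ versus $(1+\epsilon)^{\ell}$ slack in a clean rescaling of $\epsilon$ is where the technical care will be concentrated.
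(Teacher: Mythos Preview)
Your proposal is correct and follows essentially the same approach as the paper: guess $|\OPT_{(k,\ell)}|$, handle the small case by brute force, and in the large case enumerate monotone step profiles with $O(1/\epsilon)$ height levels, cover each by LP rounding with $O(1/\epsilon)$ constraints, and pad with $O(\epsilon N)$ leftmost-starting and rightmost-ending tasks. Your domination argument (``either the padding closes the gap on $e$ or we have already selected all of $T_{(k,\ell)}\cap T_e$'') is exactly the paper's argument, so the worry you flag in the last paragraph is already resolved by what you wrote two paragraphs earlier; the point about interior edges versus LP-constrained edges is handled simply by the fact that every task in $T_{(k,\ell)}$ crosses $e_M$, so covering the extreme edge at a given height automatically covers all edges between it and $e_M$ by at least as much.
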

We define the set $\bar{\T}_{M}$ by taking all combinations of selecting
exactly one set from the set $\bar{\T}_{(k,\ell)}$ of each group $T_{(k,\ell)}$.
Since there are $(\log n)^{O(1)}$ groups, by Lemma~\ref{lem:approx-group-sets}
the set $\bar{\T}_{M}$ has only quasi-polynomial size and it contains
one set $T_{M}^{*}$ which is a 
a good approximation to $T_{M}\cap \OPT$, i.e., the set $T_{M}^{*}$
dominates $T_{M}\cap \OPT$ and it is at most by a factor
$1+O(\epsilon)$ more expensive. Now each node in the recursion tree
has at most $n^{(\log n)^{O(1)}}$ children and, 
as argued above, the recursion depth
is $O(\log n)$. Thus, a call to $\ufpc(E,\emptyset)$ has quasi-polynomial
running time and yields a $(1+O(\epsilon))$-approximation for the overall problem.
\begin{theorem} \label{thm:qptas-ufp-cover}
For any $\epsilon>0$ there is a quasi-polynomial $(1+\epsilon)$-approximation algorithm for
UFP-cover if the sizes of the tasks are in a quasi-polynomial range.
\end{theorem}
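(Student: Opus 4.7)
The plan is to verify three things: correctness of the recursive routine $\ufpc$ by induction on recursion depth, a quasi-polynomial bound on its running time, and an appropriate choice of the internal accuracy parameter so that the multiplicative errors telescope to $1+\epsilon$.

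For correctness, I would prove by induction on the depth $d$ of the recursion subtree rooted at a call $\ufpc(E',T')$ that its return value is at most $(1+\epsilon')^d$ times the optimum cost of the residual subproblem on~$E'$ with $T'$ already selected, where $\epsilon'$ is the internal parameter passed to Lemma~\ref{lem:approx-group-sets}. The base case $|E'|=1$ is handled by the LP of Lemma~\ref{lem:approx-group-sets} applied to a trivial single-step profile (or by direct enumeration). For the inductive step, let $\bar{\T}_M$ be the Cartesian product, over the $(\log n)^{O(1)}$ non-empty groups $T_{(k,\ell)}$, of the sets produced by Lemma~\ref{lem:approx-group-sets}. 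Then $\bar{\T}_M$ contains a set $T_M^\ast$ with $c(T_M^\ast)\le(1+\epsilon')\,c(\OPT\cap T_M)$ that dominates $\OPT\cap T_M$ on every edge of~$E'$. Domination is crucial: after installing $T'\cup T_M^\ast$, the residual demand on every edge of $E'_L\cup E'_R$ is pointwise at most the residual demand after installing $T'\cup(\OPT\cap T_M)$, so $\OPT\setminus T_M$ restricted to the appropriate subpath is still a feasible cover of the corresponding residual subproblem. Applying the inductive hypothesis to the two recursive calls therefore bounds their outputs by $(1+\epsilon')^{d-1}\,c(\OPT\cap T_L)$ and $(1+\epsilon')^{d-1}\,c(\OPT\cap T_R)$; adding $c(T_M^\ast)$ and factoring out $(1+\epsilon')^d$ gives the claim at depth~$d$.

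For the running time, the preprocessing confines the remaining task costs to a range of width $O(n/\epsilon')$, and combined with the quasi-polynomial range of sizes this leaves only $(\log n)^{O(1)}$ non-empty groups $T_{(k,\ell)}$. By Lemma~\ref{lem:approx-group-sets} each group contributes a polynomial number of candidate sets, so $|\bar{\T}_M|\le n^{(\log n)^{O(1)}}$. Since $e_M$ is always chosen as the middle edge, the recursion has depth $O(\log n)$, producing $n^{(\log n)^{O(1)}}$ subproblems overall, each solvable in time polynomial in its input. A geometric sweep over $O(\log_{1+\epsilon}\sum_i c_i)$ values of~$B$ adds only a polynomial factor. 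To conclude the approximation guarantee, set $\epsilon':=\epsilon/(c\log n)$ for a suitable constant~$c$; then $(1+\epsilon')^{O(\log n)}\le 1+\epsilon/2$, and combined with the additive $\epsilon B/2$ slack from batch-selecting the cheap tasks during preprocessing, the overall ratio is~$1+\epsilon$, while $1/\epsilon'$ is only polylogarithmic so the running time remains quasi-polynomial. The main obstacle is the careful verification of the domination step: one must confirm that on each edge of $E'_L\cup E'_R$ the replacement of $\OPT\cap T_M$ by the differently structured $T_M^\ast$ never leaves uncovered demand that $\OPT\setminus T_M$ cannot absorb; once this pointwise comparison is in place, the remaining ingredients---the size of $\bar{\T}_M$, the $O(\log n)$ recursion depth, and the telescoping of $(1+\epsilon')$ errors---combine cleanly to yield Theorem~\ref{thm:qptas-ufp-cover}.
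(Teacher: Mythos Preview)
Your proposal is correct and follows essentially the same approach as the paper: the recursive divide-and-conquer on the middle edge, the construction of $\bar{\T}_M$ as the Cartesian product over groups of the candidate sets from Lemma~\ref{lem:approx-group-sets}, the domination argument to ensure the residual subproblems remain feasibly covered by the remainder of $\OPT$, and the quasi-polynomial branching bound via $(\log n)^{O(1)}$ groups and $O(\log n)$ recursion depth. You are a bit more explicit than the paper about the inductive accounting of the $(1+\epsilon')$ factors across the $O(\log n)$ levels and the corresponding choice $\epsilon'=\Theta(\epsilon/\log n)$, which the paper absorbs into its ``$(1+O(\epsilon))$-approximation'' statement.
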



Bansal~and~Pruhs~\cite{bansal2010geometry} give a $4$-approximation-preserving
reduction from GSP with uniform release dates to UFP-cover using geometric
rounding. Here we observe that if instead we use \emph{randomized geometric
rounding} \cite{smartcow},
then one can obtain an $e$-approximation-preserving reduction.
Together with our QPTAS for
UFP-cover, we get the following result, whose
proof we defer to Appendix~\ref{apx:QPTAS}.

\newcommand{\thmquasieapprox}{%
For any $\epsilon>0$ there is a quasi-polynomial time $(e+\epsilon)$-approximation algorithm for
GSP with uniform release dates.
}
\begin{theorem}
\label{thm:quasi-e-approx}
\thmquasieapprox
\end{theorem}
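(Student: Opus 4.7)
The plan is to compose two ingredients: (i) a randomized $e$-approximation-preserving reduction from GSP with uniform release dates to UFP-cover, obtained from the Bansal-Pruhs deterministic reduction~\cite{bansal2010geometry} by replacing the deterministic geometric rounding of completion times by randomized geometric rounding~\cite{smartcow}; and (ii) the QPTAS for UFP-cover of Theorem~\ref{thm:qptas-ufp-cover}. Their composition will produce a schedule of expected cost at most $(1+\epsilon') \cdot e \cdot \OPT_{\mathrm{GSP}}$, and choosing $\epsilon' = \epsilon/(2e)$ gives the claimed $(e+\epsilon)$-approximation ratio.

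Concretely, after binary-searching a guess $B$ for $\OPT_{\mathrm{GSP}}$ and performing the standard preprocessing that keeps only completion times at which each $f_j$ is bounded by $B$ and shifts the processing times into a quasi-polynomial range (as required by Theorem~\ref{thm:qptas-ufp-cover}), I would draw a uniform offset $U \in [0,1)$ and use the shifted geometric grid $\{(1+\epsilon)^{k+U}\}_{k \in \mathbb{Z}}$ to discretize time. For each job $j$ and each grid value $\tau = (1+\epsilon)^{k+U}$ in the relevant range, I create a UFP-cover task of size $p_j$, cost $f_j(\tau)$, and path covering all edges strictly beyond $\tau$; the cover demand on the edge at $\tau$ is $\max\{0,\, P-\tau\}$ with $P = \sum_j p_j$. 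Any feasible UFP-cover solution then decodes into a schedule by assigning each job to its earliest selected grid interval and running jobs within each interval in arbitrary order, yielding scheduling cost at most the UFP-cover cost.

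The main obstacle is showing that the resulting UFP-cover instance has expected optimum at most $e \cdot \OPT_{\mathrm{GSP}}$, for which I would invoke the per-job estimate $\mathbb{E}\bigl[f_j(\hat C_j)\bigr] \le e \cdot f_j(C_j)$ supplied by the randomized-geometric-rounding analysis of~\cite{smartcow}, where $\hat C_j$ denotes the rounded completion time of $j$ in the random grid; summing this over $j$ by linearity of expectation converts the optimal schedule into a UFP-cover solution of the required expected cost. Derandomization is achieved by enumerating the polynomially many distinct offsets $U$ that can change the rounded value of some relevant completion time, computing the UFP-cover solution of Theorem~\ref{thm:qptas-ufp-cover} for each, and returning the cheapest corresponding schedule; this preserves the quasi-polynomial running time. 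Full details of the preprocessing and of the per-job expectation bound are deferred to Appendix~\ref{apx:QPTAS}.
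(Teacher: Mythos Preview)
Your high-level plan matches the paper's: compose the QPTAS for UFP-cover (Theorem~\ref{thm:qptas-ufp-cover}) with a randomized refinement of the Bansal--Pruhs reduction that tightens its loss from~$4$ to~$e$. The gap is in \emph{which quantity} the randomized geometric rounding is applied to. You discretize \emph{time} via the grid $\{(1+\epsilon)^{k+U}\}$ and let $\hat C_j$ be $C_j$ rounded to this grid, then invoke the per-job bound $\mathbb{E}\bigl[f_j(\hat C_j)\bigr]\le e\cdot f_j(C_j)$. That bound is false for general nondecreasing~$f_j$: if $f_j$ equals $1$ on $[0,C_j]$ and an arbitrarily large value $M$ on $(C_j,\infty)$, then $\hat C_j>C_j$ almost surely and $f_j(\hat C_j)=M$, so the ratio is~$M$, not~$e$. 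Randomized geometric rounding of times controls $\mathbb{E}[\hat C_j/C_j]$, but for arbitrary $f_j$ this says nothing about $f_j(\hat C_j)/f_j(C_j)$. (Note also that nothing in your construction singles out the constant~$e$; with base $1+\epsilon$ the expected time ratio is only $1+O(\epsilon)$.)

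The correct move, and what the paper actually does, is to apply the random shift in the \emph{cost} domain. Fix a base $\gamma>1$ and a uniform offset $\alpha\in[0,1)$; for each job~$j$ let $t_i^j$ be the first time at which $f_j$ exceeds the threshold $\gamma^{\,i-1+\alpha}$, and create one UFP-cover task per interval $[t_{i-1}^j,t_i^j-1]$ with cost $f_j(t_i^j-1)$. Encoding a schedule by selecting, for every job~$j$, all of its tasks whose left endpoint is at most~$C_j$ gives a feasible cover, and the cost contributed by~$j$ is a geometric series whose top term is $f_j(C_j)\cdot\gamma^{\beta}$ with $\beta$ uniform in $[0,1)$; hence
\[
\mathbb{E}\Bigl[\textstyle\sum_{i\ge 0}\gamma^{\beta-i}\Bigr]\cdot f_j(C_j)\;=\;\frac{\gamma}{\ln\gamma}\cdot f_j(C_j),
\]
which is minimized at $\gamma=e$ and equals $e\cdot f_j(C_j)$. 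This is where the factor~$e$ genuinely comes from; with time-based rounding there is no analogous estimate.
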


\section{General Cost Functions under Speedup}
\label{sec:general-cost-speedup}

We present a polynomial time algorithm which computes a solution for
an instance of GSP with uniform release dates whose cost is optimal and
which is feasible if the machine runs with speed $1+\epsilon$ (rather than
unit speed). 

Let $1 > \epsilon>0$ be a constant and assume for simplicity that
$\tfrac{1}{\epsilon}\in\mathbb{N}$. For our algorithm, we first prove some properties
that we can assume ``at $1+\epsilon$ speedup''; by this, we mean that there is
a schedule whose cost is at most the optimal cost (without enforcing these
restricting properties) and which is feasible if we increase the speed of the
machine by a factor $1+\epsilon$. Many statements are similar to properties
that are used in~\cite{PTAS-scheduling} for constructing PTASs for the problem
of minimizing the weighted sum of completion times.

For a given schedule denote by~$S_{j}$ and~$C_{j}$ the start and end times
of job~$j$ in a given schedule  (recall that we consider only non-preemptive
schedules). We define $C_{j}^{(1+\epsilon)}$ to be the smallest power
of $1+\epsilon$ which is not smaller than $C_{j}$, i.e.,
$C_{j}^{(1+\epsilon)}:=(1+\epsilon)^{\left\lceil \log_{1+\epsilon}C_{j}\right\rceil }$,
and adjust the objective function as given in the next lemma. Also,
we impose that jobs that are relatively large are not processed too
early; formally, they do not run before $\reldate$ which is the largest
power of  $1+\epsilon$ which is at most $\epsilon/(1+\epsilon)\cdot p_{j}$
(the speedup will compensate for the delay of the start time).

\newcommand{\lemsimplerobjective}{%
At $1+ O(\epsilon)$ speedup 
we  can use the objective
function $\sum_{j}f_{j}\big(C_{j}^{(1+\epsilon)}\big)$, instead of $\sum_{j}f_{j}(C_{j})$, and  assume $S_{j}\ge\reldate$ for each job $j$.
}
\begin{lemma}
\label{lem:simpler-objective}
\lemsimplerobjective 
\end{lemma}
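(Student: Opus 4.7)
The plan is to exhibit, starting from any fixed optimal schedule $\S^*$ at unit speed with completion times $C_j^*$, a single schedule $\S'$ running at speed $s = 1+O(\epsilon)$ such that (i) $S_j' \ge \reldate$ for every job $j$ and (ii) $(C_j')^{(1+\epsilon)} \le C_j^*$ for every job $j$. Because every $f_j$ is non-decreasing, (ii) immediately gives $\sum_j f_j\big((C_j')^{(1+\epsilon)}\big) \le \sum_j f_j(C_j^*) = \OPT$, so both parts of the lemma follow from the same construction.

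I would build $\S'$ by scheduling the jobs in the same order as $\S^*$, running at speed $s$, and greedily delaying each start to respect the lower bound $\reldate$. Indexing the jobs $1,\ldots,n$ in the order of $\S^*$ and letting $C_0' := 0$, this amounts to setting $S_j' := \max\{C_{j-1}',\reldate\}$ and $C_j' := S_j' + p_j/s$. Property (i) then holds by construction, and property (ii) will reduce, via the elementary bound $(C_j')^{(1+\epsilon)} \le (1+\epsilon)\, C_j'$ coming straight from the rounding definition, to the single inductive claim that $C_j' \le C_j^*/(1+\epsilon)$.

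The induction has two cases. If $C_{j-1}' \ge \reldate$, then $C_j' = C_{j-1}' + p_j/s \le C_{j-1}^*/(1+\epsilon) + p_j/(1+\epsilon) = C_j^*/(1+\epsilon)$ as long as $s \ge 1+\epsilon$. If $C_{j-1}' < \reldate$, then $C_j' \le \reldate + p_j/s \le \epsilon p_j/(1+\epsilon) + p_j/s$, and since $C_j^* \ge p_j$ it suffices that $\epsilon/(1+\epsilon) + 1/s \le 1/(1+\epsilon)$, i.e., $s \ge (1+\epsilon)/(1-\epsilon)$. Choosing $s := (1+\epsilon)/(1-\epsilon) = 1+O(\epsilon)$ covers both regimes and completes the induction; combined with $(C_j')^{(1+\epsilon)} \le (1+\epsilon)\,C_j'$, this yields (ii).

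The main obstacle is precisely this second case. Naively chaining a $(1+\epsilon)$-speedup for the completion-time rounding with a $(1+\epsilon)$-speedup for the forward push would give total speed $(1+\epsilon)^2$, and one can check that $\reldate + p_j/(1+\epsilon)^2$ narrowly exceeds $p_j/(1+\epsilon)$, so the induction breaks. The fix is to do both transformations in one go and to exploit that $\reldate$ is defined as the largest power of $1+\epsilon$ below $\epsilon p_j/(1+\epsilon)$ rather than below $\epsilon p_j$; the extra $(1+\epsilon)$ in the denominator is exactly what closes the budget when $s$ is taken slightly above $(1+\epsilon)^2$.
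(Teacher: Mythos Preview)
Your proof is correct and rests on the same two observations as the paper: a $1+\epsilon$ speedup lets you round completion times up to the next power of $1+\epsilon$ without increasing cost, and a $1+\epsilon$ speedup creates enough slack ($\epsilon p_j/(1+\epsilon)$ per job) to push each start time forward to $\reldate$ without overshooting the old completion time.

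The packaging differs. The paper treats the two claims independently---one $1+\epsilon$ factor per property---and then simply multiplies to get $(1+\epsilon)^2=1+O(\epsilon)$, leaving the construction of a \emph{single} schedule satisfying both constraints implicit. You instead build that schedule explicitly (greedy order-preserving construction at speed $(1+\epsilon)/(1-\epsilon)$) and establish both properties via one clean induction on $C_j'\le C_j^*/(1+\epsilon)$. Your route is slightly more careful about the interaction between the two transformations, at the price of a marginally larger speedup constant; the paper's is more modular. Either way the argument is short and the content is the same.

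One small point worth making explicit: in Case~1 you use $C_j^*=C_{j-1}^*+p_j$. This holds because with uniform release dates an optimal schedule can be assumed non-preemptive and idle-free; you rely on this but do not state it.
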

Next, we discretize the time axis into intervals of the form $I_{t}:=[R_{t}, R_{t+1})$
where $R_{t}:=(1+\epsilon)^{t}$ for any integer~$t$. Note that $|I_{t}|=\epsilon\cdot R_{t}$.
Following Lemma~\ref{lem:simpler-objective}, to simplify the problem we
want to assign an artificial release date to each job~$j$. For each job~$j$,
we define $r(j):=\reldate$. Lemma~\ref{lem:simpler-objective} implies then
that we can assume $S_{j}\ge r(j)$ for each job~$j$. Therefore, we interpret
the value~$r(j)$ as the release date of job~$j$ and from now on disallow to
start job~$j$ before time~$r(j)$.

In a given schedule, we call a job \emph{$j$ large} if
$S_{j}\le\frac{1}{\epsilon^{3}}\cdot p_{j}$ and \emph{small }otherwise.
%
%
%
%
For the large jobs, we do not allow arbitrary starting times but we discretize
the time axis such that each interval contains only a constant number of
starting times for large jobs (for constant $\epsilon$). For the small jobs, we
do not want them to overlap over interval boundaries and we want that all
small jobs scheduled in an interval $I_{t}$ are scheduled during one
(connected) subinterval $I^{s}_{t}\subseteq I_{t}$.
\newcommand{\lemtechnicalsimplifications}{%
At $1+O(\epsilon)$ speedup we can assume that
\vspace*{-2mm}
\begin{itemize}
\item each small job starting during an interval~$I_{t}$ finishes during~$I_{t}$,
\item each interval~$I_{t}$ contains only~$O(\tfrac{1}{\epsilon^{3}})$ potential
start points for large jobs, 
and
\item for each interval $I_{t}$ there is a time interval $I^{s}_{t}\subseteq I_{t}$,
ranging from one potential start point for large jobs to another, which contains all
small jobs scheduled in~$I_{t}$ and no large jobs.
\end{itemize}
}
\begin{lemma}
\label{lem:technical-simplifications}
\lemtechnicalsimplifications
\end{lemma}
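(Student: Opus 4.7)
My plan is to start from an optimal unit-speed schedule $\sigma^\ast$ satisfying the conclusions of Lemma~\ref{lem:simpler-objective} and modify it into a schedule $\sigma$ that respects all three bullets, using the extra $1+O(\epsilon)$ speedup budget to absorb every perturbation without pushing any $C_j^{(1+\epsilon)}$ into a higher class. Two quantitative observations drive the argument: a small job $j$ with $S_j\in I_t$ satisfies $p_j<\epsilon^3 S_j\le\epsilon^3 R_{t+1}$, whereas a large job with $S_j\in I_t$ satisfies $p_j\ge\epsilon^3 R_t$; in particular, $p_j/|I_t|$ is $O(\epsilon^2)$ for small jobs and at least $\epsilon^2$ for large ones. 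I will work with a uniform grid on each $I_t$ of spacing $\delta_t:=\epsilon^4 R_t$, which yields exactly $1/\epsilon^3$ grid cells and has both $R_t$ and $R_{t+1}$ as grid points; these grid points are the ``potential start points for large jobs'' claimed by the second bullet.

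For the first bullet I will sweep the intervals in increasing order of $t$. At most one small job starting in $I_t$ can spill into $I_{t+1}$, namely the last small job to start in $I_t$, and since its size is at most $\epsilon^3 R_{t+1}$, postponing its start to $R_{t+1}$ delays the remainder of the schedule by at most $\epsilon^3 R_{t+1}$. Propagating this fix forward, the delay accumulated when reaching $I_t$ telescopes geometrically to $O(\epsilon^2 R_t)$, which is an $O(\epsilon)$-fraction of $|I_t|$ and is absorbed by the speedup savings on the jobs processed within $I_t$ without increasing any $(1+\epsilon)$-class.

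For the second bullet I will round every large job's start time up to the nearest grid point of $I_t$. Each individual shift is at most $\delta_t=\epsilon^4 R_t\le\epsilon\cdot p_j$, while the speedup simultaneously shaves $\tfrac{\epsilon}{1+\epsilon}p_j$ off each large job; processing the large jobs from left to right, a short induction shows that the rounded starts remain feasible and that the completion-time classes $C_j^{(1+\epsilon)}$ are preserved. For the third bullet, within each $I_t$ I will reorder all small jobs scheduled in $I_t$ into a single contiguous block placed immediately after every large operation in $I_t$ (including any large job from an earlier interval that spills into $I_t$, whose finish time is rounded up to the next grid point using the speedup once more). Since the total small work in $I_t$ is at most $|I_t|$ and $|I_t|$ is an integer multiple of $\delta_t$, I can choose the block to begin and end at grid points; the completion times of small jobs stay inside $I_t$, so their $(1+\epsilon)$-classes are unchanged.

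The main obstacle I anticipate is that the three transformations share the same slack budget: every delay consumes a slice of the $\epsilon R_t$ of wall-clock slack that the speedup provides in $I_t$. I plan to handle this via a careful bookkeeping argument showing that each bullet costs only an $O(\epsilon)$-fraction of $|I_t|$ per interval, so that choosing the constant $c$ in $1+c\epsilon$ sufficiently large absorbs all three effects simultaneously while keeping every $C_j^{(1+\epsilon)}$ no larger than in $\sigma^\ast$. This ensures that the cost of $\sigma$ under the objective of Lemma~\ref{lem:simpler-objective} is at most that of $\sigma^\ast$, completing the argument.
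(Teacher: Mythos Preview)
Your plan tracks the paper's approach closely for the first two bullets, but there is a genuine gap in the third. For the first bullet the paper is actually simpler than your cascade argument: it just notes that speeding up the work inside $I_t$ by $1+\epsilon$ frees $\tfrac{\epsilon^2}{1+\epsilon}R_t$ of idle time there, enough to absorb the single sped-up spilling job (of size at most $\epsilon^3 R_t$) \emph{inside} $I_t$; your forward-push also works but is unnecessary. The substantive difference is that in its second step the paper proves something strictly stronger than ``large jobs start on the grid'': with a grid of width $\tfrac14\tfrac{\epsilon^4}{1+\epsilon}R_t$ it shifts each large job two cells to the right and observes that the speedup savings on a large job amount to at least four cell-widths, so the job now occupies an aligned block of cells with a free cell on each side. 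The upshot is that every grid cell is either empty, all-small, or covered by a single large job. With this homogeneity, bullet three becomes a one-liner: permute the cells of $I_t$ so the small-job cells are consecutive (costs are unchanged since all completions in $I_t$ round to $R_{t+1}$).

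Your second step only aligns large-job \emph{starts}, so large-job \emph{ends} are arbitrary, and then your third-bullet plan of placing the small jobs ``immediately after every large operation in $I_t$'' does not yield a grid-aligned $I_t^s$: its left boundary is the completion time of the preceding large job, which need not be a grid point---you only round up the finish time of a large job \emph{spilling in} from an earlier interval, not of one that lives entirely in $I_t$. The justification ``since the total small work in $I_t$ is at most $|I_t|$ and $|I_t|$ is an integer multiple of $\delta_t$, I can choose the block to begin and end at grid points'' is a non-sequitur, as the small-job volume need not itself be a multiple of $\delta_t$. The repair is easy---also round every large job's completion time in $I_t$ up by one cell, or, more cleanly, establish the homogeneous-cell property as the paper does---but as written your argument for the third bullet is incomplete.
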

For the moment, let us assume that the processing times of the instance
are polynomially bounded. We will give a generalization to arbitrary
instances later.

Our strategy is the following: Since the processing times are bounded, the whole
schedule finishes within $\log_{1+\epsilon}(\sum_{j}p_{j})\le O_{\epsilon}(\log n)$
intervals. Ideally, we would like to guess the placement of all large jobs in the
schedule and then use a linear program to fill in the remaining small jobs. However,
this would result in $n^{O_{\epsilon}(\log n)}$ possibilities for the large jobs, which
is quasi-polynomial but not polynomial. Instead, we only guess the
\emph{pattern} of large-job usage for each interval. A pattern~$P$ for an interval
is a set of $O(\tfrac{1}{\epsilon^{3}})$
integers which defines the start and end times
of the large jobs which are executed during~$I_{t}$. Note that such a job might
start before~$I_{t}$ and/or end after~$I_{t}$. 
\begin{proposition}
\label{prop:patterns}For each interval $I_{t}$ there are only $N\in O_{\epsilon}(1)$
many possible patterns. The value~$N$ is independent of~$t$. 
\end{proposition}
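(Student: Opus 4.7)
The plan is to interpret any pattern as a list of $O(1/\epsilon^3)$ integers, each one indexing a potential start point for a large job in some interval near $I_t$. To prove the proposition it suffices to show that for every large job $j$ whose execution $[S_j,C_j]$ overlaps $I_t=[R_t,R_{t+1})$, both $S_j$ and $C_j$ lie in a window of $O_\epsilon(1)$ consecutive intervals around $I_t$. Since each such interval contains only $O(1/\epsilon^3)$ potential start points by Lemma~\ref{lem:technical-simplifications}, each integer in the pattern then ranges over at most $O_\epsilon(1)$ values, giving $N\le [O_\epsilon(1)]^{O(1/\epsilon^3)} = O_\epsilon(1)$.

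For the window bound I would proceed in two steps. For the upper bound on $p_j$, observe that the artificial release date of Lemma~\ref{lem:simpler-objective} forces $S_j\ge r(j)=\reldate\ge \epsilon p_j/(1+\epsilon)^2$; combining with $S_j<R_{t+1}=(1+\epsilon)R_t$ yields $p_j\le (1+\epsilon)^3 R_t/\epsilon = O_\epsilon(R_t)$. Hence $C_j\le S_j+p_j=O_\epsilon(R_t)$, placing $C_j$ in one of the first $O(\log_{1+\epsilon}(1/\epsilon)) = O_\epsilon(1)$ intervals following $I_t$. For the lower bound on $S_j$, the large-job condition $p_j\ge \epsilon^3 S_j$ combined with either $S_j\ge R_t$ (if $S_j\in I_t$) or $p_j>R_t-S_j$ (if $S_j<R_t$ but $C_j>R_t$) yields $p_j=\Omega(\epsilon^3 R_t)$ after a short case analysis. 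Feeding this back into the release-date bound gives $S_j\ge \epsilon p_j/(1+\epsilon)^2=\Omega(\epsilon^4 R_t)$, so $S_j$ lies in one of the $O_\epsilon(1)$ intervals immediately preceding $I_t$.

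Independence of $t$ is then immediate from the derivation, since every inequality above depends only on ratios of the form $R_{t+k}/R_t=(1+\epsilon)^k$, $|I_t|/R_t=\epsilon$, or $p_j/R_t$; no absolute quantity tied to $t$ enters the bounds, so the same window size and the same per-interval count of potential start points apply around every $I_t$. The main obstacle I anticipate is the case analysis for jobs that start strictly before $I_t$, for which the release date alone does not directly link $p_j$ to $R_t$; the replacement inequality $p_j>R_t-S_j$, valid precisely because $C_j>R_t$, is what closes the argument in that case.
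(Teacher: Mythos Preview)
The paper does not give a proof of this proposition; it is stated as an immediate consequence of the setup. The intended reading is that a pattern for $I_t$ only records \emph{local} information: by (the strengthened form of) Lemma~\ref{lem:technical-simplifications}, $I_t$ is partitioned into $O(1/\epsilon^3)$ subintervals $I_{t,k}$, and the pattern simply specifies which of these subintervals are occupied by large jobs and which form the small-jobs window $I_t^s$. Since the number of subintervals is the same constant $4(1+\epsilon)/\epsilon^3$ for every $t$, the number of such configurations is trivially $O_\epsilon(1)$ and independent of $t$. The global slots $Q$ (with their $\beg(s)$ and $\size(s)$) are then reconstructed by stitching together the patterns of consecutive intervals.

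Your argument is correct, but it proves more than is needed because you interpret the pattern as encoding the absolute start and end times $S_j$, $C_j$ of every large job touching $I_t$, even when these lie outside $I_t$. Under that reading you rightly need to bound the window of intervals in which $S_j$ and $C_j$ can fall, and your two-step bound (using the artificial release date for the upper bound on $p_j$, and the case split on $S_j\gtrless R_t$ combined with $p_j>R_t-S_j$ for the lower bound) is sound. This yields a stronger statement than the proposition requires, and the machinery you anticipate as ``the main obstacle'' is in fact unnecessary once one realises that a pattern is purely local to $I_t$.
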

We first guess all patterns for all intervals in parallel. Since there are
only $O_{\epsilon}(\log n)$ intervals, this yields
only $N^{O_{\epsilon}(\log n)}\in n^{O_{\epsilon}(1)}$
possible combinations for all patterns for all intervals. Suppose now that we
guessed the pattern corresponding to the optimal solution correctly.
Next, we solve a linear program that in parallel 
assigns large jobs to the slots specified by the pattern, and also, it assigns
small jobs into the remaining idle times on the intervals. Formally, we solve
the following LP. We denote by~$Q$ the set of all slots for large jobs, $\size(s)$
denotes the length of a slot~$s$, $\beg(s)$  its start time, and~$t(s)$ denotes the
index of the interval~$I_t$ that contains~$s$. For each interval~$I_{t}$
denote by $\rem(t)$ the remaining idle time for small jobs,
and consider these idle times as slots for small jobs, which we refer to by
their interval indices~$I:=\{1,\dots, \log_{1+\epsilon}(\sum_{j}p_{j})\}$.
For each pair of slot $s\in Q$  and job~$j\in J$, we introduce a variable~$x_{s,j}$
corresponding to assigning~$j$ to~$s$. Analogously, we use variables~$y_{t,j}$
for the slots in~$I$.
{\small
\begin{align}
\min~ \sum_{j\in J} & \bigg(\sum_{s\in Q} &&\hspace*{-9.5mm} f_j(R_{t(s)+1})\cdot x_{s,j}
    + \sum_{t\in I }  f_j(R_{t+1})\cdot y_{t,j}\bigg)
\label{eq:obj-function}\\
\sum_{s\in Q}x_{s,j} + \sum_{t\in I} y_{t,j}~ &=~1 &  & \forall j\in J
\label{eq:all-jobs-assigned}\\
\sum_{j\in J}x_{s,j}~ & \le~1 &  & \forall\, s\in Q
\label{eq:slot-constraint}\\
\sum_{j\in J}~p_{j}\cdot y_{t,j}~ & \leq~ \rem(t) &  & \forall \,t\in I
\label{eq:idle-constraint}\\[-2mm]
x_{s,j}\, & =\,0 &  & \forall\, s\in Q,\,\forall j\in J:~  r(j)> \beg(s)~\vee~p_{j}> \size(s)
\label{eq:x_sj}\\
y_{t,j}\, & =\,0 &  & \forall\, t\in I, \,\forall j\in J:~ r(j)> R_{t}~\vee~ p_{j}>\epsilon\cdot |I_{t}|.
\label{eq:y_tj}\\
x_{s,j},\, y_{t,j}\, &\geq\,0 &  & \forall\, s\in Q,\, \forall\, t\in I, \,\forall j\in J
\label{eq:nonneg}
\end{align}
\vspace*{-3mm}}

Denote the above LP by $\slp$. It has polynomial size and thus we
can solve it efficiently. Borrowing ideas from~\cite{shmoys1993approximation}
we round it to a solution that is not more costly and which can be
made feasible using additional speedup of $1+\epsilon$.
\newcommand{\lemslp}[1][]{%
Given a fractional solution $(x,y)$ to $\slp$. In polynomial time, we can compute
a non-negative integral solution $(x',y')$ whose cost is not larger than the cost
of $(x,y)$ and which fulfills the constraints (\ref{eq:all-jobs-assigned}),
(\ref{eq:slot-constraint}), (\ref{eq:x_sj}), (\ref{eq:y_tj}), \eqref{eq:nonneg} and
\begin{align}
\tag{\ref{eq:idle-constraint}a}
 &  & \sum_{j\in J}p_{j}\cdot y_{t,j} & \leq \rem(t)+\epsilon\cdot|I_{t}| &  & \forall\, t\in I.
 \ifthenelse{\equal{#1}{}}{\label{eq:idle-constraint-1}}{}
\end{align}
}
\begin{lemma}
\label{lem:slp}
\lemslp
\end{lemma}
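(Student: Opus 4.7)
The plan is to apply a Shmoys--Tardos style rounding~\cite{shmoys1993approximation} to $(x,y)$, viewing $\slp$ as a generalized assignment LP in which the large-job slots $s\in Q$ are unit-capacity bins and each interval small-job slot $t\in I$ is a bin of capacity $\rem(t)$ measured in processing time.

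First I would construct an auxiliary bipartite graph $H$ with jobs on the left and, on the right, all slots in $Q$ together with a family of \emph{sub-slots} for each interval $t$. The sub-slots of $t$ are produced in the standard way: sort the jobs with $y_{t,j}>0$ in order of decreasing $p_j$ and split their $y$-mass into consecutive packets of total mass one (the final packet may be lighter). This creates $\lceil\sum_j y_{t,j}\rceil$ sub-slots and yields a fractional assignment on $H$ in which each job has total mass $1$ and each right-hand vertex has total mass at most $1$. Edges of $H$ are introduced only between jobs and right-hand vertices that carry positive fractional mass, so every edge respects the zero-variable constraints (\ref{eq:x_sj})--(\ref{eq:y_tj}). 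Because the resulting assignment polytope is the bipartite matching polytope, its extreme points are integral; taking one that minimizes the objective obtained by inheriting the cost coefficients of (\ref{eq:obj-function}) gives an integral $(x',y')$ of cost no larger than that of $(x,y)$.

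Next I would verify feasibility. Constraint (\ref{eq:all-jobs-assigned}) holds since the matching saturates every job, constraint (\ref{eq:slot-constraint}) holds because each $s\in Q$ is a single right-hand vertex, and (\ref{eq:x_sj})--(\ref{eq:y_tj}) are inherited from the construction of $H$. For the relaxed bound (\ref{eq:idle-constraint}a), fix an interval $t$ and observe that at most one job is placed into each of its sub-slots. Since the sub-slots were built from the $y_{t,\cdot}$-jobs sorted by decreasing $p_j$, a standard Shmoys--Tardos calculation shows that the total processing time at $t$ exceeds $\sum_j p_j y_{t,j}\le \rem(t)$ by at most the size of the largest job $j^{\star}$ with $y_{t,j^{\star}}>0$. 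Constraint (\ref{eq:y_tj}) forces $p_{j^{\star}}\le\epsilon\cdot|I_t|$, giving exactly (\ref{eq:idle-constraint}a).

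The main obstacle is delivering both a sharp ``at most one extra largest job'' overflow bound \emph{and} the preservation of the zero-variable constraints simultaneously; both issues are handled by restricting the edges of $H$ to the support of $(x,y)$ and by the decreasing-$p_j$ sorting used to define the sub-slots within each interval.
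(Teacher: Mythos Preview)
Your proposal is correct and follows essentially the same Shmoys--Tardos rounding as the paper: both build a bipartite graph with one vertex per large-job slot and $\lceil\sum_j y_{t,j}\rceil$ sub-slot vertices per interval $t$ (populated by the $y_{t,\cdot}$-jobs in decreasing $p_j$ order), find an integral assignment of no greater cost via integrality of the bipartite assignment polytope, and bound the overflow at each $t$ by the largest participating $p_j\le\epsilon|I_t|$ using the decreasing-size layering. The only cosmetic difference is that the paper phrases the last step as an explicit chain of inequalities using $p_{t,\ell}^{\min}\ge p_{t,\ell+1}^{\max}$, whereas you invoke it as the ``standard Shmoys--Tardos calculation''.
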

In particular, the cost of the computed solution is no more than the
cost of the integral optimum and it is feasible under $1+O(\epsilon)$
speedup (accumulating all the speedups from the previous lemmas).
We remark that the technique of guessing patterns and filling them
in by a linear program was first used in~\cite{sviridenko2013approximating}.

For the general case, i.e., for arbitrary processing times, we first
show that at $1+\epsilon$ speedup, we can assume that for each job
$j$ there are only $O(\log n)$ intervals between $r(j)$ (the artificial
release date of $j$) and $C_{j}$. Then we devise a dynamic program
which moves from left to right on the time axis and considers sets
of $O(\log n)$ intervals at a time, using the above technique. See
Appendix~\ref{sec:appendix-thm-gsp-speedup}
for details.
\newcommand{\thmgspspeedup}{%
Let $\epsilon>0$. There is a polynomial time algorithm for GSP with
uniform release dates which computes a solution with optimal cost
and which is feasible if the machine runs with speed $1+\epsilon$.
}
\begin{theorem}
\label{thm:gsp-speedup}
\thmgspspeedup
\end{theorem}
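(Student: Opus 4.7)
My plan is to extend the algorithm from the polynomially-bounded case of the theorem to arbitrary processing times by a dynamic program that sweeps the time axis from left to right and invokes Lemma~\ref{lem:slp} on a sliding window of $O_\epsilon(\log n)$ consecutive intervals.

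I would first apply Lemma~\ref{lem:simpler-objective} and Lemma~\ref{lem:technical-simplifications} to pass to the simplified model, and then prove a structural bound: at an additional $1+\epsilon$ speedup every job~$j$ satisfies $C_j \le r(j)\cdot n^{O_\epsilon(1)}$, so that the interval span of $j$ between $r(j)$ and $C_j^{(1+\epsilon)}$ is at most $L := O_\epsilon(\log n)$. The argument is an exchange/shifting argument on each maximal busy block of the schedule: normalizing jobs inside the block and using that the artificial release date satisfies $p_{j'} < (1+\epsilon)^2\, r(j')/\epsilon$, one shows that the portion of the block up to~$C_j$ has length at most $n\cdot p_{\max}$ with $p_{\max} = O_\epsilon(r(j))$, which gives the claimed bound; the tiny delay introduced by normalization is absorbed by the speedup.

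Next I construct the DP. Its state after interval~$I_t$ is the tuple of patterns $(P_{t-L+1},\ldots,P_t)$ for the last~$L$ intervals together with the identities of the large jobs occupying the straddling slots of these patterns; by Proposition~\ref{prop:patterns} and the fact that each pattern has $O_\epsilon(1)$ slots, the total number of states is $N^L\cdot n^{O_\epsilon(1)} = n^{O_\epsilon(1)}$, i.e.\ polynomial. A transition appends a new pattern~$P_{t+1}$ (consistent with the straddling information in the current state) and commits every as-yet-uncommitted job $j$ with $r(j)=R_{t-L+1}$, since by the structural bound such a job can no longer be placed further to the right. The commitment is made by solving the restriction of~$\slp$ to these jobs and to the large-job and small-job slots contained in $I_{t-L+1},\ldots,I_{t+1}$, and then rounding via Lemma~\ref{lem:slp}; the incurred cost is added to the DP value. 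The output is the minimum DP value over all complete transition sequences.

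The hardest part is correctly gluing consecutive DP steps: each local LP must forbid the reassignment of jobs already committed and of jobs whose release window has not yet closed, while still giving small jobs the full intra-window flexibility that Lemma~\ref{lem:slp} requires. Since every job is touched by exactly one rounding step, the $1+\epsilon$ speedup from Lemma~\ref{lem:slp} does not compound across the sweep, and the total speedup, accumulated over this step, the structural lemma, and Lemmas~\ref{lem:simpler-objective}--\ref{lem:technical-simplifications}, stays at $1+O(\epsilon)$, which we absorb by rescaling~$\epsilon$. Correctness follows because the modified optimum induces a feasible sequence of DP transitions with the same cost, and the running time is polynomial since there are $n^{O_\epsilon(1)}$ states and each local LP has polynomial size.
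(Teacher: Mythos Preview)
Your high-level plan --- a structural bound limiting each job to $O_\epsilon(\log n)$ intervals, followed by a left-to-right DP that invokes a local copy of $\slp$ on a sliding window --- is indeed the route the paper takes. But the DP you describe has a real gap in how consecutive steps are glued, and it is precisely the ``hardest part'' you flag but do not resolve.

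Your state is the tuple of patterns $(P_{t-L+1},\ldots,P_t)$ together with the identities of a few straddling large jobs. This state does \emph{not} record how much of each slot has already been consumed by previously committed jobs, nor how much must be reserved for jobs whose release window has not yet closed. An interval $I_t$ lies in the feasible range of jobs with $L$ different artificial release dates $R_{t-L+1},\ldots,R_t$, so $L$ distinct commitment steps compete for the same small-job slot. Two histories can reach the identical state $(P_{t-L+1},\ldots,P_t)$ while leaving very different residual capacities in that slot, so the DP value is not well-defined. For the same reason your non-compounding argument fails: the $\epsilon\,|I_t|$ overflow from Lemma~\ref{lem:slp} is per \emph{interval}, not per job, and since each interval is hit by $L=O_\epsilon(\log n)$ separate roundings, the accumulated overflow is $\Theta(\epsilon\log n)\,|I_t|$, which is not absorbed by a constant-factor speedup.

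The paper's fix is to coarsen the sweep from single intervals to \emph{blocks} of $K=O_\epsilon(\log n)$ consecutive intervals, so that each interval is shared by only \emph{two} generations of jobs (those released in $B_{i-1}$ and in $B_i$). The pattern is then enriched to partition every interval's capacity explicitly between these two generations: separate large-job slots for $B_{i-1}$- and $B_i$-jobs, and two disjoint small-job subintervals $[a_t,b_t)$ and $[b_t,c_t)$, one per generation (this separation is itself nontrivial and is established by a dedicated lemma at an extra $1+\epsilon$ speedup). With this partition in the pattern, the LP for block $B_i$ sees exactly the slots reserved for it and nothing else; each slot is rounded once, so the speedup does not compound; and the DP state reduces to just $(B_i,\P_i)$. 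To repair your argument you would need either this block-and-partition idea or an explicit encoding of residual capacities in the state (discretized to $O_\epsilon(1)$ levels per interval), together with a mechanism that keeps the per-interval rounding overflow from accumulating.
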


\section{Few Classes of Cost Functions}\label{sec:few-classes}

In this section, we study the following special case of GSP with release dates. 
We assume that
each cost function $f_{j}$ can be expressed as $f_{j}=w_{j}\cdot g_{u(j)}$
for a job-dependent weight $w_{j}$, $k$ global functions $g_{1},...,g_{k}$,
and an assignment $u:J\rightarrow[k]$ of cost functions to jobs.
We present a QPTAS for this problem, assuming that $k=(\log n)^{O(1)}$
and that the jobs have at most $(\log n)^{O(1)}$ distinct release dates.
We assume that the job weights are in a quasi-polynomial range, i.e.,
we assume that there is an upper bound $W=2^{(\log n)^{O(1)}}$ for
the (integral) job weights. 

In our algorithm, we first round the values of the functions $g_{i}$
so that they attain only few values, $(\log n)^{O(1)}$ many. Then
we guess the $(\log n)^{O(1)}/\epsilon$ most expensive jobs and their costs.
For the remaining problem, we use a linear program. Since we rounded
the functions $g_{i}$, our LP is sparse, and by rounding an extreme
point solution we increase the cost by at most an $\epsilon$-fraction
of the cost of the previously guessed jobs, which yields an $(1+\epsilon)$-approximation
overall. 

Formally, we use a binary search framework to estimate the optimal
value~$B$. Having this estimate, we adjust the functions $g_{i}$
such that each of them is a step function with at most $(\log n)^{O(1)}$
steps, all being powers of $1+\epsilon$ or $0$. 
\newcommand{\lempolylogvalues}{%
At $1+\epsilon$ loss we can assume that for each $i\in[k]$ and each~$t$
it holds that $g_{i}(t)$ is either~$0$ or a power of $1+\epsilon$
in $\big[\frac{\epsilon}{n}\cdot\frac{B}{W},B\big)$.
}
\begin{lemma}
\label{lem:polylog-values} 
\lempolylogvalues
\end{lemma}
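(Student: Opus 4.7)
The plan is to apply three successive modifications to each $g_i$ while losing at most a factor $1+O(\epsilon)$ relative to the current guess $B$.

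First, cap above $B$: for any pair $(i,t)$ with $g_i(t)\ge B$, forbid placing any job $j$ with $u(j)=i$ at completion time~$t$. This is safe, since $w_j\ge 1$ implies such a placement alone costs at least $B$, exhausting the entire budget, and we only care about schedules of cost at most $B$. Second, round each remaining positive value up to $(1+\epsilon)^{\lceil\log_{1+\epsilon}g_i(t)\rceil}$; this multiplies each individual $f_j(C_j)$ by at most a factor $1+\epsilon$. Third, set every rounded value below the threshold $\tfrac{\epsilon}{n}\cdot\tfrac{B}{W}$ to $0$; since $w_j\le W$ and $|J|\le n$, this decreases the total cost of any schedule by at most $n\cdot W\cdot \tfrac{\epsilon}{n}\cdot\tfrac{B}{W}=\epsilon B$.

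For the loss analysis, take an optimal schedule $\OPT$ for the original instance with cost at most $B$. By step~1 it uses no forbidden placement, by step~2 its cost grows by at most a factor $1+\epsilon$, and step~3 only decreases it, so the cost of $\OPT$ under the modified $g_i$ is at most $(1+\epsilon)B$. In the reverse direction, any schedule of cost $C'$ under the modified $g_i$ has original cost at most $C'+\epsilon B$, since step~3 is the only step that dropped the value below the original and only by at most $\tfrac{\epsilon B}{n}$ per job. Embedded in the binary search on $B$ this yields a total loss of $1+O(\epsilon)$, which is the desired bound after rescaling $\epsilon$.

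The three operations are largely mechanical; the one thing worth checking carefully is that they compose so the non-zero values of the final $g_i$ are exactly powers of $1+\epsilon$ in $[\tfrac{\epsilon}{n}\cdot\tfrac{B}{W},B)$, which is immediate from the order in which the cap, rounding and truncation are applied. The main conceptual point is the interaction with binary search: one must ensure $B$ is guessed within a $1+\epsilon$ factor of the true optimum so that the additive $\epsilon B$ slack from step~3 translates into a multiplicative $(1+O(\epsilon))$ slack in the final approximation ratio.
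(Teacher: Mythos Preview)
Your argument is correct and uses essentially the same idea as the paper; the only cosmetic difference is that the paper rounds small positive values \emph{up} to the threshold $\tfrac{\epsilon}{n}\cdot\tfrac{B}{W}$ (so the modified functions pointwise dominate the originals and no explicit reverse-direction argument is needed), whereas you round them \emph{down} to~$0$ and compensate with the additive $\epsilon B$ slack. One minor quibble: your claim that the non-zero values end up in $[\tfrac{\epsilon}{n}\cdot\tfrac{B}{W},B)$ is not quite ``immediate''---after forbidding $g_i(t)\ge B$ and then rounding up, a value just below~$B$ can land at a power of $1+\epsilon$ that is $\ge B$---but this is harmless for the application, since the number of distinct step values is still $(\log n)^{O(1)}$.
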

Our problem is in fact equivalent to assigning a due date $d_{j}$
to each job (cf.~\cite{bansal2010geometry}) such that the due dates
are \emph{feasible,} meaning that there is a preemptive schedule where every
job finishes no later than its due date, and the objective being $\sum_{j}f_{j}(d_{j})$.
The following lemma characterizes when a set of due dates is feasible.
{}
\begin{lemma}
[\cite{bansal2010geometry}]\label{lem:release-date-extension}
Given a set of jobs and a set of due dates. The due dates are feasible
if and only if for every interval $I=[r_{j},d_{j'}]$ for any two
jobs $j,j'$, the jobs in $X(I):=\{j:r_{j}\in I\}$ that are assigned
a deadline after $I$ have a total size of at least $\ex(I):=\max(\sum_{j\in X(I)}p_{j}-|I|,0)$.
That is, 
$\sum_{\bar{j}\in X(I):d_{\bar{j}}>d_{j'}}p_{\bar{j}}$ is at least~$\ex(I)$
for all intervals $I=[r_{j},d_{j'}]$.
\end{lemma}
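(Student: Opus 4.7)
The plan is to recognize the condition as the classical Hall-type feasibility condition for preemptive single-machine scheduling with release dates and deadlines, and verify both directions by standard arguments. First I would rewrite the inequality in a cleaner equivalent form: complementing inside $X(I)$ shows that
\[
\sum_{\bar j\in X(I):\,d_{\bar j}>d_{j'}}p_{\bar j}\;\ge\;\ex(I) \quad\Longleftrightarrow\quad \sum_{\bar j\in X(I):\,d_{\bar j}\le d_{j'}}p_{\bar j}\;\le\;|I|,
\]
so the lemma equivalently states: for every $I=[r_j,d_{j'}]$, the total processing time of jobs that are both released in $I$ and due by $d_{j'}$ is at most $|I|$.

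Necessity is immediate: in any feasible preemptive schedule each job $\bar j$ with $r_{\bar j}\in I$ and $d_{\bar j}\le d_{j'}$ must be processed entirely inside $[r_{\bar j},d_{\bar j}]\subseteq I$, and since the machine serves at most one job at a time, the total processing time of these jobs cannot exceed $|I|$.

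For sufficiency I would invoke the preemptive Earliest Deadline First (EDF) policy and show that under the hypothesis it meets every deadline. Suppose for contradiction that some job $j^{*}$ misses $d_{j^{*}}$, and let $t^{*}\in[0,d_{j^{*}}]$ be the latest time at which the machine is either idle or processing some job with deadline strictly greater than $d_{j^{*}}$ (if no such time exists, take $t^{*}$ to be the smallest release date among jobs with deadline $\le d_{j^{*}}$). By maximality of $t^{*}$, throughout $(t^{*},d_{j^{*}}]$ the machine only processes jobs with deadline $\le d_{j^{*}}$; and since EDF would have preempted any lower-priority job the moment a job with deadline $\le d_{j^{*}}$ became pending, every job processed in $[t^{*},d_{j^{*}}]$ must in fact have release date $\ge t^{*}$. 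The transition at $t^{*}$ is triggered precisely by the release of some job $j$ with $d_{j}\le d_{j^{*}}$, so $t^{*}=r_{j}$ and the interval $I=[r_{j},d_{j^{*}}]$ has the form required by the lemma (with $j'=j^{*}$). All jobs processed in $I$ then lie in $X(I)\cap\{d\le d_{j^{*}}\}$, yet the total processing they receive equals $d_{j^{*}}-t^{*}=|I|$ while $j^{*}$ is still incomplete, so their total size strictly exceeds $|I|$. This contradicts the hypothesis applied to $I$.

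The main obstacle is the bookkeeping that identifies $t^{*}$ as an actual release date, so that the offending interval can be written in the form $[r_{j},d_{j'}]$ rather than as an arbitrary subinterval of the time axis; this relies on the fact that EDF can only switch from "idle or low-priority" to "serving a job with deadline $\le d_{j^{*}}$" at a release event of such a job. Once that reduction is in hand, the rest of the sufficiency argument is the textbook optimality proof of EDF, and together with necessity it yields the claimed characterization.
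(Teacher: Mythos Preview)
The paper does not prove this lemma at all: it is quoted verbatim from \cite{bansal2010geometry} and used as a black box, so there is no ``paper's own proof'' to compare against.

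Your argument is correct and is precisely the classical EDF feasibility proof for preemptive single-machine scheduling with release dates and deadlines. The complement rewriting is valid (both directions of the equivalence hold since $A\ge\max(S-|I|,0)$ reduces to $A\ge S-|I|$, i.e.\ $B\le|I|$), necessity is immediate from the containment $[r_{\bar j},d_{\bar j}]\subseteq I$, and the sufficiency argument via the last ``idle or low-priority'' instant $t^{*}$ is the standard one; your observation that $t^{*}$ must coincide with the release of some job with deadline $\le d_{j^{*}}$ is exactly what is needed to put the violating interval in the required form $[r_{j},d_{j'}]$. One small point worth making explicit: $j^{*}$ itself lies in $X(I)$ because if it had been released before $t^{*}$ it would have been pending there, contradicting the choice of $t^{*}$---you use this implicitly when concluding that the total size strictly exceeds $|I|$.
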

Denote by $D$ all points in time where at least one cost function
$g_{i}$ increases. It suffices to consider only those values as possible
due dates. 
\begin{proposition}
\label{prop:all-due-dates-in-D}
There is an optimal due date assignment such that $d_{j}\in D$ for each job $j$. 
\end{proposition}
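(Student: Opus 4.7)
Plan. Given any optimal feasible assignment $\{d_j^*\}_{j\in J}$, I round each due date \emph{upward} to the nearest element of $D$, setting
\[
d_j' := \min\{t \in D : t \ge d_j^*\}.
\]
(Should this set be empty for some $j$---which can happen only if $d_j^*$ lies past every jump of every $g_i$---I first augment $D$ with a single dummy point beyond any plausible completion time, e.g.\ $\sum_j p_j + \max_j r_j$, so the minimum is always defined.) I will argue that the modified assignment $\{d_j'\}$ is both feasible and of the same total cost as $\{d_j^*\}$.

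For cost preservation: by the definition of $D$ and the minimality of $d_j'$, the open interval $(d_j^*, d_j')$ contains no jump of any $g_i$, and in particular $g_{u(j)}$ is constant on $[d_j^*, d_j')$. Under the standard convention for step cost functions---where $g_{u(j)}$ attains its pre-jump value at a jump point---this gives $g_{u(j)}(d_j') = g_{u(j)}(d_j^*)$ and hence $f_j(d_j') = f_j(d_j^*)$. This modelling convention (equivalently, interpreting $d_j$ as ``completion strictly before the next increase of $g_{u(j)}$'') is the only subtle point in the argument and I expect it to be the main detail to pin down.

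For feasibility I invoke Lemma~\ref{lem:release-date-extension} and transfer its condition from the original assignment. Fix an arbitrary pair of jobs $j,j'$ and let $I' = [r_j, d_{j'}']$; I need to bound $\sum_{\bar j \in S'} p_{\bar j} \le |I'|$, where $S' := \{\bar j \in X(I') : d_{\bar j}' \le d_{j'}'\}$. Every $\bar j \in S'$ satisfies $d_{\bar j}^* \le d_{\bar j}' \le d_{j'}'$ and, by feasibility of $\{d_j^*\}$, also $r_{\bar j} \le d_{\bar j}^*$ (any feasible schedule places $C_{\bar j}\in [r_{\bar j}, d_{\bar j}^*]$). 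Picking $\hat j \in S'$ that maximises $d_{\hat j}^*$, all of $S'$ is then contained in $X([r_j, d_{\hat j}^*]) \cap \{\bar j : d_{\bar j}^* \le d_{\hat j}^*\}$, so Lemma~\ref{lem:release-date-extension} applied to $\{d_j^*\}$ at the pair $(j,\hat j)$ yields
\[
\sum_{\bar j \in S'} p_{\bar j} \;\le\; d_{\hat j}^* - r_j \;\le\; d_{j'}' - r_j \;=\; |I'|,
\]
which is exactly the feasibility condition demanded of $\{d_j'\}$ at the pair $(j,j')$. Thus $\{d_j'\}\subseteq D$ is feasible and has the same cost as $\{d_j^*\}$, establishing the proposition.
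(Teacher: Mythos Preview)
The paper states this proposition without proof (it is treated as folklore), so there is no argument to compare against. Your proof is correct.

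Two remarks. First, the left-continuity convention you flag is consistent with how the paper uses~$D$: the phrase ``denote by $d_j$ the \emph{latest} time where it attains the guessed cost'' and the LP constraint $x_{j,t}=0$ whenever $w_j\,g_{u(j)}(t)>\ct$ both presuppose that $g_{u(j)}(t)$ at a jump point $t\in D$ equals the pre-jump value, so your assumption is exactly the right one here.

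Second, your feasibility argument via Lemma~\ref{lem:release-date-extension} is fine, but you are working harder than necessary. Feasibility of a due-date assignment means, by definition, that some preemptive schedule meets all the due dates. Since $d_j'\ge d_j^*$ for every~$j$, the very schedule witnessing feasibility of $\{d_j^*\}$ also witnesses feasibility of $\{d_j'\}$; no appeal to the combinatorial characterisation is needed. Your route through the lemma is correct but roundabout.
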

Denote by $R$ the set of all release dates of the jobs. Recall that
$|R|\le(\log n)^{O(1)}$. We guess now the $|D|\cdot|R|/\epsilon$
most expensive jobs of the optimal solution and their respective costs.
Due to the rounding in Lemma~\ref{lem:polylog-values} we have that
$|D|\le k\cdot\log_{1+\epsilon}(W\cdot n/\epsilon)=(\log n)^{O(1)}$
and thus there are only $O(n^{|D|\cdot|R|/\epsilon})=n^{(\log n)^{O(1)}/\epsilon}$
many guesses. 

Suppose we guess this information correctly. Let~$J_{E}$ denote the
guessed jobs and for each job $j\in J_{E}$ denote by $d_{j}$ the
latest time where it attains the guessed cost, i.e., its \emph{due
date}. Denote by~$\ct$ the minimum cost of a job in~$J_{E}$, according
to the guessed costs. The remaining problem consists in assigning
a due date $d_{j}\in D$ to each job $J\setminus J_{E}$ such that
none of these jobs costs more than $\ct$, all due dates together
are feasible, and the overall cost is minimized. We express this as
a linear program. In that LP, we have a variable $x_{j,t}$ for each 
pair of a job $j\in  J\setminus J_{E}$ and a due date $t\in D$
such that $j$ does not cost more than $\ct$ when finishing at time $t$.
We add the constraint $\sum_{t\in D}x_{j,t}=1$ for each job $j$,
modeling that the job has a due date, 
and one constraint for each interval $[r,t]$ with $r\in R$ and $t\in D$
to model the condition given by Lemma~\ref{lem:release-date-extension}.
See Appendix~\ref{apx:few-classes} for the full LP.
%

In polynomial time, we compute an extreme point solution $x^{*}$
for the~LP. It has at most $|D|\cdot|R|+|J\setminus J_{E}|$
many non-zeros. Each job $j$ needs at least one non-zero variable
$x_{j,t}^{*}$, due to the constraint $\sum_{t\in D}x_{j,t}=1$.
Thus, there are at most $|D|\cdot|R|$ fractionally assigned jobs,
i.e., jobs $j$ having a variable $x_{j,t}^{*}$ with $0<x_{j,t}^{*}<1$.
We define an integral solution by rounding 
$x^{*}$ as follows: For each job $j$ we set $d_{j}$ to be the
maximum value $t$ such that $x_{j,t}^{*}>0$. We round up at most
$|D|\cdot|R|$ jobs and after the rounding, each of them costs at
most $\ct$. Hence, those jobs cost at most an $\epsilon$-fraction
of the cost of guessed jobs ($J_{E}$).
\newcommand{\lemextremepointcost}{%
Denote by $c(x^{*})$ the cost of the solution $x^{*}$. We have that\\
\centerline{%
$\sum_{j\in J\setminus J_{E}}f_{j}(d_{j})~\le~ c(x^{*})+\epsilon\cdot\sum_{j\in J_{E}}
f_{j}(d_{j})$.
}
}
\begin{lemma}
\label{lem:extreme-point-cost}
\lemextremepointcost
\end{lemma}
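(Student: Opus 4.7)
The plan is to split the jobs in $J\setminus J_{E}$ into those assigned integrally by $x^{*}$ and those assigned fractionally, observe that the rounding is a no-op on the former, and charge the rounding loss incurred by the latter to the guessed set $J_{E}$.

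First I would count the fractional jobs using that $x^{*}$ is an extreme point. The LP has the $|J\setminus J_{E}|$ assignment equalities $\sum_{t\in D} x_{j,t}=1$ and at most $|D|\cdot|R|$ feasibility inequalities from Lemma~\ref{lem:release-date-extension}, one per interval $[r,t]$ with $r\in R$ and $t\in D$. After introducing slacks, the extreme point $x^{*}$ has at most $|J\setminus J_{E}|+|D|\cdot|R|$ non-zero coordinates (including slacks), so the number of non-zero entries among the $x_{j,t}^{*}$ is also at most this value. Since each job requires at least one non-zero entry to satisfy its assignment constraint, and a fractional job requires at least two, at most $|D|\cdot|R|$ jobs can be fractionally assigned.

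Next I would bound the rounded cost of each job. By construction of the LP, every variable $x_{j,t}$ is restricted to pairs with $f_{j}(t)\le \ct$. Hence an integrally assigned job contributes exactly $\sum_{t}x_{j,t}^{*}f_{j}(t)=f_{j}(d_{j})$ to both the LP and the rounded solution, while a fractionally assigned job costs at most $\ct$ after rounding. Summing over all jobs in $J\setminus J_{E}$ yields
\[
\sum_{j\in J\setminus J_{E}} f_{j}(d_{j})~\le~c(x^{*})+|D|\cdot|R|\cdot \ct.
\]

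Finally, I would invoke the choice of $J_{E}$: it contains the $|D|\cdot|R|/\epsilon$ most expensive jobs of the optimum, and $\ct$ is the minimum cost among them, so $\sum_{j\in J_{E}} f_{j}(d_{j})\ge \tfrac{|D|\cdot|R|}{\epsilon}\cdot \ct$, i.e., $|D|\cdot|R|\cdot \ct\le \epsilon\cdot\sum_{j\in J_{E}}f_{j}(d_{j})$. Combining this with the previous inequality gives the claim. The step that I expect to require the most care is the extreme-point counting: one must verify that the assignment and feasibility constraints alone suffice to bound the number of fractional jobs by $|D|\cdot|R|$, so that the cardinality of $J_{E}$ is just large enough to absorb the total rounding loss. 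Everything else is direct accounting.
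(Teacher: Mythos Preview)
Your proof is correct and follows essentially the same approach as the paper: bound the number of fractionally assigned jobs by $|D|\cdot|R|$ via the extreme-point structure, use the constraint $f_j(t)\le \ct$ to cap each rounded fractional job at $\ct$, and then absorb the total rounding loss $|D|\cdot|R|\cdot\ct$ into $\epsilon\cdot\sum_{j\in J_E}f_j(d_j)$ using $|J_E|=|D|\cdot|R|/\epsilon$ and the definition of $\ct$. The paper presents the extreme-point counting in the main text before stating the lemma, while you fold it into the proof, but the argument is otherwise identical.
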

Since $c(x^{*}) + \sum_{J_{E}}f_{j}(d_{j})$ is a
lower bound on the optimum, we obtain a $(1+\epsilon)$-approximation.
As there are quasi-polynomially many guesses for the expensive jobs
and the remainder can be done in polynomial time, we obtain a QPTAS.
\begin{theorem}
There is a QPTAS for GSP, assuming that each cost function $f_{j}$
can be expressed as $f_{j}=w_{j}\cdot g_{u(j)}$ for some job-dependent
weight $w_{j}$ and at most $k=(\log n)^{O(1)}$ global functions
$g_{1},...,g_{k}$, and that the jobs have at most $(\log n)^{O(1)}$
distinct release dates. 
\end{theorem}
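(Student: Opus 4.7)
The plan is to assemble the ingredients of this section into a single QPTAS by binary searching for the optimal cost value $B$. Since job weights lie in $[1,W]$ with $W=2^{(\log n)^{O(1)}}$ and the relevant costs come from quasi-polynomially many expressions of the form $w_j\cdot g_i(t)$, only quasi-polynomially many candidate values of $B$ need to be tried. Given a guess of $B$ within a $(1+\epsilon)$-factor of $\OPT$, I would first apply Lemma~\ref{lem:polylog-values} to replace each $g_i$ by a step function whose values are either $0$ or powers of $1+\epsilon$ in $[\tfrac{\epsilon}{n}\cdot\tfrac{B}{W},B)$, giving $|D|\le(\log n)^{O(1)}$ breakpoints in total; by Proposition~\ref{prop:all-due-dates-in-D} it suffices to restrict due dates to $D$.

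Next I would reduce the scheduling problem to a due-date assignment problem using the characterization of Lemma~\ref{lem:release-date-extension}: a choice of $d_j\in D$ for each $j$ is feasible iff, for every interval $I=[r,t]$ with $r\in R$, $t\in D$, the jobs released in $I$ with due dates beyond $t$ cover the excess $\ex(I)$. Since $|R|\le(\log n)^{O(1)}$ too, this is a family of $|D|\cdot|R|\le(\log n)^{O(1)}$ linear constraints. I would then enumerate all $n^{|D|\cdot|R|/\epsilon}=n^{(\log n)^{O(1)}/\epsilon}$ choices of the $|D|\cdot|R|/\epsilon$ most expensive jobs $J_E$ of $\OPT$ together with their costs, fixing for each $j\in J_E$ its due date $d_j$ as the latest $t\in D$ achieving the guessed cost, and letting $\ct$ be the smallest guessed cost.

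For a correct guess, I would then solve the LP described in the text: a variable $x_{j,t}$ for each $j\in J\setminus J_E$ and $t\in D$ with $f_j(t)\le\ct$, the assignment constraints $\sum_{t\in D} x_{j,t}=1$, and one constraint per interval $[r,t]\in R\times D$ enforcing Lemma~\ref{lem:release-date-extension}. An extreme point $x^*$ has at most $|D|\cdot|R|+|J\setminus J_E|$ non-zeros; coupled with $\sum_t x_{j,t}^*=1$, at most $|D|\cdot|R|$ jobs can be fractional. Rounding each such job $j$ up to the largest $t$ with $x_{j,t}^*>0$ preserves the feasibility constraints (they are of a ``tail covering'' form that can only improve under such rounding), and by Lemma~\ref{lem:extreme-point-cost} adds cost at most $\epsilon\sum_{j\in J_E}f_j(d_j)$.

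The main obstacle, which is packaged into Lemma~\ref{lem:extreme-point-cost}, is the quantitative accounting that ties the threshold ``$|D|\cdot|R|/\epsilon$ most expensive'' to the number of fractional jobs: it guarantees that each of the at most $|D|\cdot|R|$ rounded-up jobs pays at most $\ct$, and $\ct\le \tfrac{\epsilon}{|D|\cdot|R|}\sum_{j\in J_E}f_j(d_j)$ by the choice of $|J_E|$. Combining this with the fact that $c(x^*)+\sum_{j\in J_E}f_j(d_j)$ lower bounds the optimal cost for the current guess of $B$, the returned schedule has cost $(1+O(\epsilon))\OPT$, and rescaling $\epsilon$ gives the claimed QPTAS.
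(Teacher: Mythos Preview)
Your proposal is correct and follows essentially the same approach as the paper: binary search on $B$, round the $g_i$ via Lemma~\ref{lem:polylog-values}, guess the $|D|\cdot|R|/\epsilon$ most expensive jobs together with their costs, solve the LP with one covering constraint per $(r,t)\in R\times D$, and round an extreme point using the bound of Lemma~\ref{lem:extreme-point-cost}. Your explicit observation that rounding each fractional job to its largest due date can only help the ``tail covering'' constraints, and your arithmetic linking $\ct$ to $\sum_{j\in J_E}f_j(d_j)$, make the argument slightly more self-contained than the paper's summary, but the method is the same.
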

%

\vspace{-2ex}
\bibliographystyle{plain}
\bibliography{../citations}

\begin{thebibliography}{10}

\bibitem{PTAS-scheduling}
F.~Afrati, E.~Bampis, C.~Chekuri, D.~Karger, C.~Kenyon, S.~Khanna, I.~Milis,
  M.~Queyranne, M.~Skutella, C.~Stein, and M.~Sviridenko.
\newblock Approximation schemes for minimizing average weighted completion time
  with release dates.
\newblock In {\em Proceedings of FOCS 1999}, pages 32--44, 1999.

\bibitem{amzingUFP2014}
A.~Anagnostopoulos, F.~Grandoni, S.~Leonardi, and A.~Wiese.
\newblock A mazing 2+$\epsilon$ approximation for unsplittable flow on a path.
\newblock In {\em Proceedings of SODA 2014}, 2014.

\bibitem{BCES2006}
N.~Bansal, A.~Chakrabarti, A.~Epstein, and B.~Schieber.
\newblock A quasi-{PTAS} for unsplittable flow on line graphs.
\newblock In {\em Proceedings of STOC 2006}, pages 721--729, 2006.

\bibitem{BansalDhamdhere2007}
N.~Bansal and K.~Dhamdhere.
\newblock Minimizing weighted flow time.
\newblock {\em ACM T.\ Alg.}, 3(4), 2007.

\bibitem{SODA-unsplit-flow}
N.~Bansal, Z.~Friggstad, R.~Khandekar, and R.~Salavatipour.
\newblock A logarithmic approximation for unsplittable flow on line graphs.
\newblock In {\em Proceedings of SODA 2009}, pages 702--709, 2009.

\bibitem{bansal2012weighted}
N.~Bansal and K.~Pruhs.
\newblock Weighted geometric set multi-cover via quasi-uniform sampling.
\newblock In {\em Proceedings of ESA 2012}, pages 145--156.

\bibitem{bansal2010geometry}
N.~Bansal and K.~Pruhs.
\newblock The geometry of scheduling.
\newblock In {\em Proceedings of FOCS 2010}, pages 407--414, 2010.
\newblock See also \url{http://www.win.tue.nl/~nikhil/pubs/wflow-journ3.pdf}.

\bibitem{BansalVerschaeObervation}
N.~Bansal and J.~Verschae.
\newblock Personal communication.

\bibitem{BBFNS2000journal}
A.~Bar-Noy, R.~Bar-Yehuda, A.~Freund, J.~Naor, and B.~Schieber.
\newblock A unified approach to approximating resource allocation and
  scheduling.
\newblock {\em J. ACM}, 48(5):1069--1090, 2001.

\bibitem{bar2005equivalence}
R.~Bar-Yehuda and D.~Rawitz.
\newblock On the equivalence between the primal-dual schema and the local ratio
  technique.
\newblock {\em SIAM J.\ Discrete Math.}, 19(3):762--797, 2005.

\bibitem{BSW11}
P.~Bonsma, J.~Schulz, and A.~Wiese.
\newblock A constant factor approximation algorithm for unsplittable flow on
  paths.
\newblock In {\em Proceedings of FOCS 2011}, pages 47--56, 2011.

\bibitem{carr2000strengthening}
R.~D. Carr, L.~K. Fleischer, V.~J. Leung, and C.~A. Phillips.
\newblock Strengthening integrality gaps for capacitated network design and
  covering problems.
\newblock In {\em Proceedings of SODA 2000}, pages 106--115, 2000.

\bibitem{chakaravarthy2011resource}
V.~T. Chakaravarthy, A.~Kumar, S.~Roy, and Yogish Sabharwal.
\newblock Resource allocation for covering time varying demands.
\newblock In {\em Proceedings of ESA 2011}, volume 6942 of {\em LNCS}, pages
  543--554. 2011.

\bibitem{CCGK2002}
A.~Chakrabarti, C.~Chekuri, A.~Gupta, and A.~Kumar.
\newblock Approximation algorithms for the unsplittable flow problem.
\newblock In {\em Proceedings of APPROX 2002}, volume 2462 of {\em LNCS}, pages
  51--66, 2002.

\bibitem{chekuri2002approximation}
C.~Chekuri and S.~Khanna.
\newblock Approximation schemes for preemptive weighted flow time.
\newblock In {\em Proceedings of STOC 2002}, pages 297--305, 2002.

\bibitem{ChekuriKhannaZhu2001}
C.~Chekuri, S.~Khanna, and A.~Zhu.
\newblock Algorithms for minimizing weighted flow time.
\newblock In {\em Proceedings of STOC 2001}, pages 84--93, 2001.

\bibitem{CMS07}
C.~Chekuri, M.~Mydlarz, and F.~Shepherd.
\newblock Multicommodity demand flow in a tree and packing integer programs.
\newblock {\em ACM T.\ Alg.}, 3, 2007.

\bibitem{cheung2011primal}
M.~Cheung and D.~B Shmoys.
\newblock A primal-dual approximation algorithm for min-sum single-machine
  scheduling problems.
\newblock In {\em Proceedings of APPROX 2011}, volume 6845 of {\em LNCS}, pages
  135--146. 2011.

\bibitem{smartcow}
M.-Y. Kao, J.~H. Reif, and S.~R. Tate.
\newblock Searching in an unknown environment: An optimal randomized algorithm
  for the cow-path problem.
\newblock {\em Inform. Comput.}, 131(1):63--79, 1996.

\bibitem{lawler77}
E.~L. Lawler.
\newblock A ``pseudopolynomial'' algorithm for sequencing jobs to minimize
  total tardiness.
\newblock {\em Ann. Discrete Math.}, 1:331--342, 1977.

\bibitem{LP1986}
L.~Lov{\'a}sz and M.~Plummer.
\newblock {\em Matching Theory}, volume~29 of {\em Annals of Discrete
  Mathematics}.
\newblock North-Holland, Amsterdam, 1986.

\bibitem{MegowVerschae2013}
N.~Megow and J~Verschae.
\newblock Dual techniques for scheduling on a machine with varying speed.
\newblock In {\em Proceedings of ICALP 2013}, volume 7965 of {\em LNCS}, pages
  745--756. 2013.

\bibitem{MestreVerschae2013}
J.~Mestre and J.~Verschae.
\newblock A 4-approximation for scheduling on a single machine with general
  cost function.
\newblock \url{http://arxiv.org/abs/1403.0298}.

\bibitem{shmoys1993approximation}
D.~B. Shmoys and {\'E}.~Tardos.
\newblock An approximation algorithm for the generalized assignment problem.
\newblock {\em Math. Program.}, 62(1-3):461--474, 1993.

\bibitem{sviridenko2013approximating}
M.~Sviridenko and A.~Wiese.
\newblock Approximating the configuration-{LP} for minimizing weighted sum of
  completion times on unrelated machines.
\newblock In {\em Proceedings of IPCO 2013}, volume 7801 of {\em LNCS}, pages
  387--398. 2013.

\end{thebibliography}

\newpage

\begin{appendix}
\noindent
\textbf{\Large Appendix}

\newcommand{\finish}{\hspace*{4cm} \textcolor{red}{\Large\bf . . .}}

\section{Omitted proofs from Section~\ref{sec:qptas-ufp}}\label{apx:QPTAS}

\newcommand{\prof}{Q}
\newcommand{\profset}{\mathcal{Q}}

In order to prove Lemma~\ref{lem:approx-group-sets}, we formally introduce the
notion of a \emph{profile}. A profile~$\prof:E'\to \mathbb{R}_{\geq 0}$ assigns
a \emph{height}~$\prof(e)$ to each edge~$e\in E'$, and a profile~$\prof$
\emph{dominates} a profile~$\prof'$ if $\prof(e)\geq \prof'(e)$ holds for all
$e\in E'$. The profile~$\prof_T$ \emph{induced} by the  tasks~$T$ is defined by
the heights $\prof_T(e)~:=~\sum_{i\in T_e} p_i$, where~$T_e$ 
denotes all tasks in~$T$ whose path contains the edge~$e$. Finally, a set of
tasks~$T$ dominates a set of tasks~$T'$ if $\prof_T$ dominates $\prof_{T'}$.

\medskip

\begin{restate}{Lemma}{\ref{lem:approx-group-sets}}{\lemapproxgroupsets}
In the first step, we guess the number of tasks in $\OPT_{(k,\ell)}
:=T_{(k,\ell)}\cap \OPT$. Abusing notation, 
we write $\OPT_{(k,\ell)}$ also for the total cost of the tasks in $\OPT_{(k,\ell)}$.
If~$|\OPT_{(k,\ell)}|$ is smaller
than~$\tfrac{1}{\epsilon^2}$ then we can guess an optimal
set~$\OPT_{(k,\ell)}$. Otherwise, we will consider a polynomial number of
certain approximate profiles one of which underestimates the unknown true
profile induced by~$\OPT_{(k,\ell)}$ by at most
$O(\epsilon)\cdot\big|\OPT_{(k,\ell)}\big|$. For each approximate profile we will
compute a cover of cost  no more than $1+O(\epsilon)$ the optimum, and in case
of the profile being close to the true profile, we can extend this solution to
a cover of the true profile by adding only
$O(\epsilon)\cdot\big|\OPT_{(k,\ell)}\big|$ more tasks.

Several arguments in the remaining proof are based on the structure
of~$T_{(k,\ell)}$ and the resulting structure of the true profile
$\prof_{\OPT_{(k,\ell)}}$. Since all tasks in~$T_{(k,\ell)}$ containing the
edge~$e_M$ and spanning a subpath of~$E'$, the height of the profile
$\prof_{\OPT_{(k,\ell)}}$ is unimodular: It is non-decreasing until $e_M$ and
non-increasing after that; see Figure~\ref{fig:step-profile}. In particular, a
task that covers a certain edge~$e$ covers all edges in between~$e$ and~$e_M$ as
well.

For the approximate profiles, we restrict to heights from
\begin{align*}
\mathcal{H}
~:=~\left\{j\cdot\epsilon\cdot|\OPT_{(k,\ell)}|\cdot(1+\epsilon)^{\ell+1}\,\Big|~
        j\in\left\{0,1,\dots,\tfrac{1}{\epsilon}\right\} \right\}\,.
\end{align*}
Moreover, aiming to approximate the true profile, we only take into account
profiles in which the edges have non-decreasing and non-increasing height
before and after~$e_M$ on the path, respectively. Utilizing the natural ordering
of the edges on the path, we formally define the set~$\profset$ of approximate
profiles as follows
\begin{align*}
\profset~:=~\left\{ \prof ~\left|~
\begin{minipage}{7.6cm}
\small
$\prof(e) \in\mathcal{H} ~~\forall\, e\in E'
~~\wedge~~ \prof(e)\leq\prof(e') ~~\forall\, e<e'\leq e_M$\\
\hspace*{2.85cm}
$\wedge~~~ \prof(e)\geq\prof(e') ~~\forall\, e_M\leq e<e'$
\end{minipage}
\right.\right\}\,.
\end{align*}
Since $|\OPT_{(k,\ell)}| \cdot (1+\epsilon)^{\ell+1}$ is an upper bound on the maximum
height of  $\prof_{\OPT_{(k,\ell)}}$, there is a profile~$\prof^*\in\profset$ which
is dominated by~$\prof_{\OPT_{(k,\ell)}}$ and for which the
gap $\prof_{\OPT_{(k,\ell)}}(e) -\prof(e)$ does not
exceed $\epsilon\cdot|\OPT_{(k,\ell)}|\cdot(1+\epsilon)^{\ell+1}$ for all $e\in E'$.
Observe that by construction, an approximate profile can have 
at most~$|\mathcal{H}|$ edges at
which it jumps from one height to a larger one, and
analogously, it can have at most~$|\mathcal{H}|$ edges where it can
jump down to some smaller height. Hence, $\profset$ contains at
most $n^{2\,|\mathcal{H}|} = n^{2/\epsilon}$ profiles.

For each approximate profile $\prof\in\profset$, we compute a cover based
on LP rounding. To this end,
we denote by~$e_{L}(h)$ and~$e_{R}(h)$ the first and last edge~$e\in E'$
for which $\prof(e)\geq h$, respectively.
Note that by the
structure of the paths of tasks in~$T_{(k,\ell)}$, in fact every set
of tasks covering~$e_{L}(h)$  also covers all edges between~$e_M$
and~$e_{L}(h)$ by at least the same amount, and analogously for~$e_{R}(h)$.
Regarding the LP-formulation, this allows us to only require a sufficient
covering of the edges~$e_{L}(h)$ and~$e_{R}(h)$ rather than of all edges. 
Denoting by~$P_i$ the path of a task~$i$, and by~$x_i$ the decision
variable representing its selection for the cover,  we formulate the LP as follows
\begin{align*}
\min \sum_{i \in T_{(k,\ell)}}& c_i\cdot x_i \\
\sum_{i \in T_{(k,\ell)}:e_{L}(h)\in P_i}  \hspace*{-5mm} x_i\cdot p_i ~& \geq~h &&  \forall\,h\in \mathcal{H}\\
\sum_{i \in T_{(k,\ell)}:e_{R}(h)\in P_i}  \hspace*{-5mm} x_i\cdot p_i ~& \geq~h &&  \forall\,h\in \mathcal{H}\\
0~\leq x_i ~& \leq~1 && \forall\, i \in T_{(k,\ell)}\,.
\end{align*}

If there exists a feasible solution to the LP, we round up all fractional
values~$x_{i}^{*}$ (i.e., values $x_{i}^{*}\in(0,1)$) of some optimal
extreme point solution~$x^{*}$, and we choose the corresponding
tasks as a cover for~$\prof$ and denote them by~$T^{*}$. Since the
LP has only $2|\mathcal{H}|=\tfrac{2}{\epsilon}$ more constraints
than variables, its optimal extreme point solutions contain at most~$\tfrac{2}{\epsilon}$
fractional variables. Hence, the additional cost incurred by the rounding
does not exceed~$\tfrac{2}{\epsilon}(1+\epsilon)^{k+1}$, where the
latter term is the maximum task cost in~$T_{(k,\ell)}$. Let us assume
for calculating the cost of the computed solution that $Q=Q^{*}$. Then,
the cost of the selected tasks is at most 
\begin{align*}
\sum_{i\in T_{(k,\ell)}}c_{i}\cdot x_{i}^{*}\,+\,\tfrac{2}{\epsilon}(1+\epsilon)^{k+1}
& ~\leq~\OPT_{(k,\ell)}\,+\,2\epsilon\cdot\left|\OPT_{(k,\ell)}\right|\cdot(1+\epsilon)^{k+1}\\
& ~\leq~\big(1+2\epsilon(1+\epsilon)\big)\cdot\OPT_{(k,\ell)}\,,
\end{align*}
where the first and second inequality follows from $\big|\OPT_{(k,\ell)}\big|\geq\tfrac{1}{\epsilon^{2}}$
and from the minimum task weight in~$T_{(k,\ell)}$, respectively,
and moreover, the first inequality uses that~$\prof=Q^{*}$ is dominated
by~$\prof_{\OPT_{(k,\ell)}}$.

After covering~$\prof$ in the first step with~$T^{*}$, in the
second step, we extend this cover by additional edges~$A^{*}\subseteq T_{(k,\ell)}\setminus T^{*}$.
We define the set~$A^{*}$ to be the $\epsilon\,(1+\epsilon)\cdot\big|\OPT_{(k,\ell)}\big|$
tasks in $T_{(k,\ell)}\setminus T^{*}$ with the leftmost start vertices
and the $\epsilon\,(1+\epsilon)\cdot\big|\OPT_{(k,\ell)}\big|$ tasks in
$T_{(k,\ell)}\setminus T^{*}$ with the rightmost end vertices. We add
$T^{*}\cup A^{*}$ to the set~$\bar{\T}_{(k,\ell)}$. 

Assume that $Q=Q^{*}$. Then the above LP has a feasible solution and in particular the 
We claim that the computed tasks $T^{*}\cup A^{*}$ dominate $\OPT_{(k,\ell)}$.
Firstly, observe that any set of $\epsilon\,(1+\epsilon)\cdot\big|\OPT_{(k,\ell)}\big|$
tasks from~$T_{(k,\ell)}$ has a total size of at least the gap between
two height steps from~$\mathcal{H}$. Hence, if an edge~$e$ is
covered by that many edges from~$A^{*}$ and $Q=Q^{*}$ then we know
that $Q_{T^{*}\cup A^{*}}(e)\ge\prof_{\OPT_{(k,\ell)}}(e)$. 

On the other hand, if an edge~$e$ is covered by less than $\epsilon\,(1+\epsilon)\cdot\big|\OPT_{(k,\ell)}\big|$
tasks from~$A^{*}$, we know that there exists no further task in
$T_{(k,\ell)}\setminus(T^{*}\cup A^{*})$ whose path contains~$e$.
Otherwise, 
this would be a contradiction to the choice of the tasks $A^{*}$
being the $\epsilon\,(1+\epsilon)\cdot\big|\OPT_{(k,\ell)}\big|$ ones with the
leftmost start and rightmost end vertices,  respectively.
Thus, since in this second case $T^{*}\cup A^{*}$ contains all tasks
that cover~$e$, we have that $Q_{T^{*}\cup A^{*}}(e)\ge\prof_{\OPT_{(k,\ell)}}(e)$. 

Finally, the total cost of~$A^{*}$ does not exceed
\begin{align*}
2\epsilon\,(1+\epsilon)\cdot\big|\OPT_{(k,\ell)}\big|\cdot(1+\epsilon)^{k+1}
~\leq~2\epsilon\,(1+\epsilon)^{2}\cdot\OPT_{(k,\ell)}\,.
\end{align*}
and thus the total cost of $T^{*}\cup A^{*}$ is upper-bounded by
$$\big(1+2\epsilon(1+\epsilon)(2+\epsilon)\big)\cdot \OPT_{(k,\ell)}\,.$$
We complete the proof by redefining~$\epsilon$ appropriately. 
\end{restate}

\begin{restate}{Theorem}{\ref{thm:quasi-e-approx}}{\thmquasieapprox}
        The heart of the proof is an $e$-approximation-preserving reduction from GSP
        with uniform release dates to UFP-cover. Although here we
        develop a randomized algorithm, we note that the reduction can be
        de-randomized using standard techniques.

        Given an instance of the scheduling problem we construct an instance of
        UFP-cover as follows. For ease of presentation, we take our path $G=(V,E)$ to
        have vertices $0, 1, \ldots, P$; towards the end, we explain how to obtain an
        equivalent and more succinct instance. For each $i = 1, \ldots, P$, edge $e
        =(i-1,i)$ has demand $u_e = P - i$.

        The reduction has two parameters, $\gamma > 1$  and $\alpha \in [0,1]$,
        which will be chosen later to minimize the approximation guarantee. For
        each job $j$, we define a sequence of times $t_0^j, t_1^j, t_2^j,
        \ldots, t^k_j$ starting from $0$ and ending with $P+1$ such that the
        cost of finishing a job in between two consecutive times differs by at
        most a factor of $\gamma$. Formally, $t_0^j = 0$, $t_k^j = P+1$ and
        $t_i^j$ is the first time step such that $f(t_i^j) >
        \gamma^{i - 1 +\alpha}$. For each $i> 0$ such that $t_{i-1}^j < t_i^j$, we
        create a task covering the interval $[t_{i-1}^j, t_i^j - 1]$ having demand
        $p_j$ and costing $f_j(t_i^j - 1)$.

        Given a feasible solution of the UFP-cover instance, we claim that we
        can construct a feasible schedule of no greater cost. For each job $j$,
        we consider the right-most task chosen (we need to pick at least one
        task from each job to be feasible) in the UFP-cover solution and assign
        to $j$ a due date equal to the right endpoint of the task. Notice that
        the cost of finishing the jobs by their due date equals the total cost
        of these right-most tasks. By the feasibility of the UFP-cover
        solution, it must be the case that for each time $t$, the total
        processing volume of jobs with a due date of $t$ or great is at least
        $T-t+1$. Therefore, scheduling the jobs according to earliest due date
        first, yields a schedule that meets all the due date. Therefore, the
        cost of the schedule is at most the cost of the UFP-cover instance.

        Conversely, given a feasible schedule, we claim that, if $\alpha$ is
        chosen uniformly at random and set $\gamma = e$, then there is a solution of
        the UFP-cover instance whose expected cost is at most $e$ times more
        expensive that the cost of the schedule. For each job $j$, we pick all
        the tasks whose left endpoint is less than or equal to the completion
        time of $j$. It follows that the UFP-cover solution is feasible. Let
        $f_j(C_j)$ be the cost incurred by $j$. For a fixed $\alpha$, let the
        most expensive task induced by $j$ cost $f_j(C_j)
        \gamma^{\beta}$. Notice that $\beta$ is also uniformly distributed in
        $[0,1]$. The combined expected cost of all the tasks induced by $j$ is
        therefore
        \begin{equation*}
                \int_0^1 f_j(C_j) \left( \gamma^\beta + \gamma^{\beta-1} + \cdots \right) d \beta = f_j(C_j) \frac{\gamma}{\ln \gamma},
        \end{equation*}
        which is minimum at $\gamma = e$. By linearity of expectation, we get that
        the total cost of the UFP-cover solution is at most an $e$ factor larger than
        the cost of the schedule.

        To de-randomize the reduction, and at the expense of adding another
        $\epsilon'$ to the approximation factor, one can discretize the random
        variable $\alpha$, solve several instances, and return the one producing the
        best solution. Finally, we mention that it is not necessary to construct the
        full path from $0$ to $P$. It is enough to keep the vertices where tasks
        start or end. Stretches where no task begins or end can be summarized by an
        edge having demand equal to the largest demand in that stretch.

        Applying the $e$-approximation-preserving reduction and then running
        the $(1+ \epsilon)$-approximation of Theorem~\ref{sec:qptas-ufp} finishes the proof.
\end{restate}

\section{Omitted proofs from Section~\ref{sec:general-cost-speedup}}

%
In the following lemmas, we show different properties that we can assume
at a speedup of~$1+\epsilon$. In fact, each property requires to increase
the speed by another factor of $1+\epsilon$. Compared to the initial unit
speed, the final speed will be some power of $1+\epsilon$. Technically,
we consolidate the resulting polynomial in~$\epsilon$ to some~$\epsilon' = O(\epsilon)$,
achieving all properties of the lemmas at speed~$1+\epsilon'$.
\medskip

\begin{restate}{Lemma}{\ref{lem:simpler-objective}}{\lemsimplerobjective}
Consider some job~$j$ with completion time~$C_j$ in an arbitrary schedule
at unit speed. At speed~$1+\epsilon$,
time~$C_{j}^{(1+\epsilon)}$ corresponds to
\begin{align*}
(1+\epsilon)^{\left\lceil \log_{1+\epsilon}\tfrac{C_j}{1+\epsilon} \right\rceil}
~=~(1+\epsilon)^{\left\lceil \log_{1+\epsilon} C_j\right\rceil -1}
~\leq~C_j\,,
\end{align*}
and hence, the ensued cost never exceeds the original cost.

Regarding the second point of the lemma, we observe that running
a job~$j$ of processing time~$p_j$ at speed~$1+\epsilon$ allows
for an additional idle time of length ${\epsilon/(1+\epsilon)\cdot p_j}$
compared to running it at unit speed. Hence, in case that $S_j < \reldate$
we can set its start time to~$\reldate$ without exceeding its unit
speed completion time.

Hence, we can make the assumptions of the lemma at a total
speedup of $(1+\epsilon)^2$, which is $1+O(\epsilon)$ under our assumption that $\epsilon <1$, so the lemma follows.
\end{restate}

\noindent
Lemma~\ref{lem:technical-simplifications} is restated in a slightly stronger 
way, the statement given here immediately implies the version in the main
part of the paper.
\bigskip

\begin{restate}{Lemma}{\ref{lem:technical-simplifications}}{%
Let~$I_{t,k} :=[R_{t,k}, R_{t,k+1})$ where
$R_{t,k}:=(1+k\cdot\tfrac{1}{4}\,\tfrac{\epsilon^{4}}{1+\epsilon})\,R_{t}$
for $t\in\mathbb{N}$ and $k\in\{0,...,4\,\frac{1+\epsilon}{\epsilon^{3}}\}$.
\begin{itemize}
\item At $1+\epsilon$ speedup any small job starting during an interval~$I_{t}$ finishes in~$I_{t}$.
\item At $1+\epsilon$ speedup we
can assume that each large job starts at some point in time~$R_{t,k}$ and every
interval~$I_{t,k}$ is used by either only small jobs or by one large job
or it is empty.
\item For each interval $I_{t}$ there is a time interval $I_{t,k,\ell}:=[R_{t,k},R_{t,\ell})$
with $0 \leq k\leq \ell \leq 4\,\frac{1+\epsilon}{\epsilon^{3}}$
during which no large jobs are scheduled, and no small jobs are scheduled
during $I_{t}\setminus I_{t,k,\ell}$.
\end{itemize}
}
%
%
Consider a small job that is started in~$I_t$ and that is completed in
some later interval. By definition, its length is at most~$\epsilon^{3}\cdot R_{t+1}$.
At speed~$1+\epsilon$, the interval~$I_t$ provides an additional idle time 
of length
\begin{align*}
\left(1-\tfrac{1}{1+\epsilon}\right)\epsilon \cdot R_t
~=~ \tfrac{\epsilon^2}{1+\epsilon}\cdot R_t\,,
\end{align*}
and the length of the small job reduces to at most~$\epsilon^{3}\cdot R_{t}$.
Since for sufficiently small~$\epsilon$ it holds that
$\tfrac{\epsilon^2}{1+\epsilon} \geq \epsilon^{3}$, the small job can be
scheduled during the idle time, and hence, it finishes in~$I_t$. 
%

Regarding the second point of the lemma, we observe that the length
of a large job starting during~$I_t$ is at least~$\epsilon^3\cdot R_t$ by
definition. When running a large job at speed~$1+\epsilon$, its processing
time reduces by at least~$\epsilon^4/(1+\epsilon)\cdot R_t$ which equals four
times the gap between two values~$R_{t,k}$. If~$I_{t,k}$ and~$I_{t,\ell}$
are the first and last interval in~$I_t$ used by some large job~$j$ in a
unit speed schedule then, at speed~$1+\epsilon$, we can start~$j$
at time $R_{t,k+2}$, and it will finish no later than~$R_{t,\ell-1}$
or it will finish in some later interval~$I_s$, $s>t$.
In case of job~$j$ finishing in~$I_t$, the speedup allows us to assume~$j$ 
to block the interval~$[R_{t,k+2}, R_{t,\ell-1})$, and we know that no other job
is scheduled in this interval.

Otherwise, if~$j$ finishes in some later~$I_s$,
let~$I_{s,m}$ be the subinterval of its completion. Since~$j$ is not necessarily
large in~$I_s$, its reduce in runtime due a speedup may only be marginal
with respect to~$I_s$. In~$I_{s,m}$, the job~$j$ may be followed by a set of
$s$-small jobs and a $s$-large job (both possibly not existing). Analogously
to the above argumentation, at a speedup of~$1+\epsilon$, we can start the
$s$-large job at time~$R_{s,m+2}$, and the interval~$I_{s,m+1}$ becomes
empty. We use this interval to schedule the small jobs from~$I_{s,m}$. This
delays their start, however, they still finish in~$I_s$ which is sufficient:  By
the first part of Lemma~\ref{lem:simpler-objective}, we can calculate the
objective function as if every job  finished at the next larger value~$R_{r}$
after its actual completion time, i.e., at the end of the interval~$I_{r}$ during
which it finishes. Hence, within an interval~$I_{r}$ we can rearrange the
intervals~$I_{r,k}$ without changing the cost.
This completes the proof of the second part of the lemma.

The proof of the third part is a straight-forward implication
of its second part. By this we can assume that all small jobs are contained
in intervals~$I_{t,k}$ that contain no large jobs. Applying again the first part
of Lemma~\ref{lem:simpler-objective}, we can rearrange those intervals  in
such a way that they appear consecutively.
\end{restate}

\begin{restate}{Lemma}{\ref{lem:slp}}{\lemslp[nolabel]}
The proof follows the general idea of~\cite{shmoys1993approximation}.
Given some fractional solution~$(x,y)$ to the
sLP \eqref{eq:all-jobs-assigned}\,--\,\eqref{eq:nonneg},
%
we construct
a fractional matching~$M$ in a bipartite
graph~$G=(V\cup W, E)$. For each job~$j\in J$ and for each large slot~$s\in Q$,
we introduce vertices~$v_j\in V$ and~$w_s\in W$, respectively. Moreover, for
each slot of small jobs~$t\in I$, we add~$k_t:=\big\lceil\sum_{j\in J}y_{t,j}\big\rceil$
vertices $w_{t,1},\dots,w_{t,k_t}\in W$. We introduce an edge~$(v_j, w_s)\in E$
with cost~$f_j(R_{t(s)+1})$ for all job-slot pairs for which $x_{s,j}>0$, and we
choose it to an extent of~$x_{s,j}$ for~$M$. Regarding the vertices
$w_{t,1},\dots,w_{t,k_t}$, we add edges in the following way. We first sort all
jobs~$j$ with $y_{t,j}>0$ in non-increasing order of their length~$p_j$, and we
assign them greedily to $w_{t,1},\dots,w_{t,k_t}$; that is, we choose the first
vertex~$w_{t,\ell}$ which has not yet been assigned one unit of fractional jobs,
we assign as much as possible of~$y_{t,j}$ to it, and if necessary, we assign the
remaining part to the next vertex~$w_{t,\ell+1}$. Analogously to the above edges,
we define the cost of an edge~$(v_j, w_{t,\ell})$ to be~$f_j(R_{t+1})$, and we
add it fractionally to~$M$ according to the fraction~$y_{t, \ell,j}$ of~$y_{t,j}$ the job was
assigned to~$w_{t,\ell}$ by the greedy assignment.
Note that  $p_{t,\ell}^{\min}\geq p_{t,\ell+1}^{\max}$ for $\ell=1,\dots,k_t-1$
where~$p_{t,\ell}^{\min}$ and~$p_{t,\ell}^{\max}$ are the minimum and
maximum length of all jobs (fractionally) assigned to~$w_{t,\ell}$, respectively.

By construction,~$M$ is in fact a fractional matching, i.e.,  for every
vertex~$v_j\in V$ the set~$M$ contains edges whose chosen fractions add
up to exactly~$1$. Moreover, the total cost of~$M$ equals the cost of the
solution~$(x,y)$. Due to standard matching theory, we know that there also
exists an integral matching~$M'$ in~$G$ whose cost does not exceed the
cost of~$M$, and since~$G$ is bipartite, we can compute such a
matching in polynomial time, see e.g., \cite{LP1986}.
We translate~$M$ back into an integral solution~$(x',y')$ of the LP
where we set~$y_{t,j}=1$ for every edge~$(v_j, w_{t,\ell})$ in~$M$.
It remains to show that~$(x',y')$ satisfies~(\ref{eq:all-jobs-assigned}),
(\ref{eq:slot-constraint}), \eqref{eq:idle-constraint-1}, (\ref{eq:x_sj}), (\ref{eq:y_tj})
and~(\ref{eq:nonneg}). All constraints but~\eqref{eq:idle-constraint-1}
are immediately satisfied by construction. In order to show
that~\eqref{eq:idle-constraint-1} is satisfied observe that
\begin{align*}
\sum_{j\in J} p_j \cdot y'_{t,j}
~~ &\leq~~ \sum_{\ell=1}^{k_t} p_{t,\ell}^{\max} 
~~\leq~~  p_{t,1}^{\max}  + \sum_{\ell=2}^{k_t} p_{t,\ell}^{\max} 
~~\leq~~  \epsilon\cdot|I_t|  +\sum_{\ell=1}^{k_t-1} p_{t,\ell}^{\min}\\ 
&\leq~~  \epsilon\cdot|I_t|  +\sum_{\ell=1}^{k_t-1}\hspace*{-4mm}\sum_{\genfrac{}{}{0pt}{2}{j\in J:}{~~~(v_j, w_{t,\ell})\in E}}
    \hspace*{-5mm} p_j\cdot y_{t,\ell,j}
~~\leq~~  \epsilon\cdot|I_t|  +\sum_{\ell=1}^{k_t}\hspace*{-4mm}\sum_{\genfrac{}{}{0pt}{2}{j\in J:}{~~~(v_j, w_{t,\ell})\in E}}
    \hspace*{-5mm} p_j\cdot y_{t,\ell,j}\\
&=~~  \epsilon\cdot|I_t|  +\sum_{j\in J}  p_j\cdot y_{t,j}
~~\leq~~  \epsilon\cdot|I_t|  + \rem(t)\,,
\end{align*}
where the third inequality follows from~(\ref{eq:y_tj}).
\end{restate}

\section{Proof of Theorem~\ref{thm:gsp-speedup} for general processing times}
\label{sec:appendix-thm-gsp-speedup}

In this section, we provide the missing technical details which
allow to generalize the proof of Theorem~\ref{thm:gsp-speedup}
from polynomially bounded processing times to general processing
times.

\medskip
\restateclaim{Theorem}{\ref{thm:gsp-speedup}}{\thmgspspeedup}
\medskip

\noindent
We first prove that at $1+\epsilon$ speedup, we can assume that jobs
``live'' for at most $O(\log n)$ intervals, i.e., for each job $j$ there
are only $O(\log n)$ intervals between $r(j)$ (the artificial release
date) and $C_{j}$. Then, we devise a dynamic program which moves
on the time axis from left to right, considers blocks of $O(\log n)$
consecutive intervals at once and computes a schedule for them using
the approach from Section~\ref{sec:general-cost-speedup}.

\begin{lemma}
\label{lem:halflife-of-jobs}
At $1+\epsilon$ speedup we can assume that $\frac{C_{j}}{r(j)}\le q(n):=\tfrac{1}{\epsilon^{3}}\,n +(1+\epsilon)^{5}$.
Thus, $[r(j),C_{j})$ is contained in
at most $K\le O_{\epsilon}(\log n)$ intervals.
\end{lemma}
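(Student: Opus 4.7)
The plan is to verify the bound by a rescheduling argument applied to any feasible schedule, which by Lemma~\ref{lem:simpler-objective} we may assume satisfies $S_k\ge r(k)$ for every job $k$. The key quantitative ingredient is that by the definition of the artificial release date, $r(j)\ge \epsilon p_j/(1+\epsilon)^{2}$, and hence $p_j\le (1+\epsilon)^{2}r(j)/\epsilon$. In particular, every job $k$ with $r(k)\le r(j)$ also satisfies $p_k\le (1+\epsilon)^{2}r(j)/\epsilon$, so the set $J_{\le}:=\{k: r(k)\le r(j)\}$ contributes total processing time at most $n\,(1+\epsilon)^{2}r(j)/\epsilon$.

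Suppose for contradiction that $C_j>q(n)\,r(j)$ in the current schedule. Then the interval $[r(j),C_j]$ has length at least $n\,r(j)/\epsilon^{3}$, whereas the total processing contribution of $j$ together with the jobs in $J_{\le}$ is only $O(n\,r(j)/\epsilon)$, which is strictly smaller. By pigeonhole, a large portion of $[r(j),C_j]$ must either be idle time or be occupied by jobs $k$ with $r(k)>r(j)$. In the first case I slide $j$ into the idle stretch (possibly after using the $1+\epsilon$ speedup to consolidate scattered idle pockets); in the second case I swap $j$ with such a later-released job $k$. In either case $S_j$ strictly decreases. Iterating drives $S_j$ down to at most $r(j)+n(1+\epsilon)^{2}r(j)/\epsilon$, and therefore $C_j\le (n+1)(1+\epsilon)^{2}r(j)/\epsilon+r(j)\le q(n)\,r(j)$, contradicting our assumption. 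The second claim of the lemma, that $[r(j),C_j)$ is covered by at most $O_{\epsilon}(\log n)$ intervals, is then immediate since the $I_t$ are geometrically spaced with ratio $1+\epsilon$, so the number of intervals meeting $[r(j),C_j)$ is at most $\log_{1+\epsilon}(q(n))+1=O_{\epsilon}(\log n)$.

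The main obstacle is showing that the exchange step does not increase the total cost. Moving $j$ earlier weakly decreases $f_j(C_j)$, but moving a displaced job $k$ later could in principle increase $f_k(C_k)$, since $f_k$ is an arbitrary non-decreasing function. I resolve this by carefully arranging each swap so that $k$ remains within the same geometric window $[(1+\epsilon)^{t},(1+\epsilon)^{t+1})$ it previously occupied, using the extra $1+\epsilon$ speedup budget to absorb the small shift. Since by Lemma~\ref{lem:simpler-objective} the objective depends on $C_k$ only through the rounded value $C_k^{(1+\epsilon)}$, keeping $k$ inside the same window leaves $f_k(C_k^{(1+\epsilon)})$ unchanged, so the swap is cost-preserving.
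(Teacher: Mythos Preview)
Your exchange argument has a genuine gap. When you swap $j$ with an earlier-scheduled job $k$ satisfying $r(k)>r(j)$, the completion time of $k$ increases, by as much as $p_j$ in an adjacent swap. You assert that the $1+\epsilon$ speedup lets you keep $k$ inside its original window $I_t$, but the slack the speedup creates inside $I_t$ is only $\tfrac{\epsilon}{1+\epsilon}\,|I_t|=\epsilon^2(1+\epsilon)^{t-1}$, whereas $p_j$ can be as large as $(1+\epsilon)^2 r(j)/\epsilon$. For the first several intervals after $r(j)$ this is orders of magnitude larger than the available slack, so a single swap can push $k$ across many window boundaries and $f_k\big(C_k^{(1+\epsilon)}\big)$ may strictly increase. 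You also iterate the swap/slide step an unbounded number of times and for every violating job $j$, yet the $1+\epsilon$ speedup is a single global budget; nothing in your argument bounds the cumulative slack consumed across all these moves, so feasibility at speed $1+O(\epsilon)$ is never established.

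The paper avoids exchanges altogether. It applies the $1+\epsilon$ speedup once, which creates a fresh idle block of length $\epsilon^2(1+\epsilon)^{t-1}$ inside every interval $I_t$. Taking $s=\log_{1+\epsilon}(n/\epsilon^3)+3$ and using exactly your bound $p_k\le (1+\epsilon)^2 r(k)/\epsilon$, a one-line calculation shows that the idle block inside the \emph{single} interval $I_{t+s}$ already has length at least $\sum_{k:r(k)\le R_t}p_k$. Hence any job $j$ with $R_{t-1}<r(j)\le R_t$ that is still unfinished past $R_{t+s+1}$ can simply be placed into this new idle time: nothing is displaced, and its cost can only decrease. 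This yields $C_j/r(j)\le(1+\epsilon)^{s+2}$ directly, with one global application of the speedup and no swap bookkeeping.
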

\begin{proof}
By using $1+\epsilon$ speedup we create an idle time of
$$
|I_{t}| - \tfrac{1}{1+\epsilon}\cdot|I_{t}|
~=~\tfrac{\epsilon}{1+\epsilon}\cdot|I_{t}|
~=~\frac{\epsilon^{2}\,(1+\epsilon)^{t}}{1+\epsilon}
~=~\epsilon^{2}\,(1+\epsilon)^{t-1}
$$
in each interval~$I_{t}$.
Then, the idle time during the interval
$I_{t+s}$ with $s:=\log_{1+\epsilon}\big(\tfrac{n}{\epsilon^{3}}\big)+3$
can fit all jobs~$j$ with $r(j)\le R_{t}$:
$$
\epsilon^{2}\,(1+\epsilon)^{t+s-1}
~=~\epsilon^{2}\,(1+\epsilon)^{t+\log_{1+\varepsilon}(n/\epsilon^{3})+2}
~=~ n\cdot\tfrac{1}{\epsilon}\,(1+\epsilon)^{t+2}
~\geq\sum_{j:r(j)\le R_{t}} \!\! p_{j}\,,
$$
where the last inequality is a consequence of Lemma~\ref{lem:simpler-objective}
which implies in the case of ${r(j)\le R_{t}}$
$$
t
~\geq~ \left\lfloor \log_{1+\epsilon}\left(\tfrac{\epsilon}{1+\epsilon}\cdot p_{j}\right) \right\rfloor
~\geq~  \log_{1+\epsilon}\left(\tfrac{\epsilon}{1+\epsilon}\cdot p_{j}\right) -1
~=~  \log_{1+\epsilon}\left(\epsilon\cdot p_{j}\right) -2\,.
$$

Since all jobs~$i$ with~$r(i)\le R_{t-1}$ can be assumed to be scheduled
in the idle time of some earlier interval if necessary, we can assume
$R_{t-1}<r(j)\le R_{t}$, and hence,
$$
\frac{C_{j}}{r(j)}
~\leq~ \frac{R_{t+s+1}}{R_{t-1}}
~=~ (1+\epsilon)^{s+2}
~=~ \tfrac{1}{\epsilon^{3}}\cdot n +(1+\epsilon)^{5}\,.
$$
In particular, it is sufficient to consider~$s+2=O_{\epsilon}(\log n)$
intervals for processing a job.
\qed
\end{proof}
Throughout the remainder of this section we denote by
$K:=\big\lceil\log_{1+\epsilon}(q(n))\big\rceil\in O_{\epsilon}(\log n)$
where~$q(n)$ is the polynomial from Lemma~\ref{lem:halflife-of-jobs}.
Thus,~$K$ denotes the number of intervals between the time~$r(j)$
and the completion time~$C_{j}$ of each job~$j$.

If after the assumption of Lemma~\ref{lem:halflife-of-jobs} there
is a point in time~$s$ that will \emph{not }schedule any
job, i.e., there is no job~$j$ with  $s\in[r(j),r(j)\cdot q(n))$,
then we divide the instance into two independent pieces.
\begin{proposition}
Without loss of generality we can assume that the union of all intervals $\bigcup_{j}[r(j),r(j)\cdot q(n))$
is a (connected) interval.
\end{proposition}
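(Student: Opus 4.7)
The plan is to argue that if the union of the windows $[r(j), r(j)\cdot q(n))$ decomposes into two or more maximal connected intervals, then the instance splits into independent subinstances that can be handled separately by the algorithm, and their schedules concatenated.

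First I would set up the decomposition. Suppose the union is not connected; let $U_1, U_2, \ldots, U_m$ be its maximal connected components, listed in left-to-right order along the time axis. For each component $U_h$, define $J_h := \{ j \in J : [r(j), r(j)\cdot q(n)) \subseteq U_h \}$. By definition of a connected component, every window lies entirely inside exactly one $U_h$, so $\{J_h\}_h$ partitions $J$. Moreover, since the $U_h$ are the maximal connected components, consecutive components are separated by a strict gap, i.e.\ $\sup U_h < \inf U_{h+1}$.

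Next I would argue independence. By Lemma~\ref{lem:halflife-of-jobs} (which we have already paid a $1+\epsilon$ speedup for), we may assume every job $j$ satisfies $S_j \ge r(j)$ and $C_j < r(j)\cdot q(n)$, so $j$ is scheduled entirely inside its window $[r(j), r(j)\cdot q(n)) \subseteq U_{h(j)}$. Consequently, a job in $J_h$ only occupies time in $U_h$, and the machine is idle on the gaps between the $U_h$. Thus, given any feasible schedule (respecting the speedup assumptions) for the full instance, its restriction to the jobs in $J_h$ is a feasible schedule for the subinstance on $J_h$; conversely, concatenating independently chosen feasible schedules for each $J_h$ (in the idle slots inside the respective $U_h$) yields a feasible schedule for all of $J$. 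Since the objective $\sum_j f_j(C_j)$ decomposes as $\sum_h \sum_{j\in J_h} f_j(C_j)$, the optimum of the full instance equals the sum of the optima of the subinstances.

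Therefore, to solve the whole problem it suffices to solve each subinstance $J_h$ separately and output the concatenation of the individual schedules, which I would then feed into the dynamic program described in the sequel. This reduction lets us assume, without loss of generality, that $\bigcup_j [r(j), r(j)\cdot q(n))$ is a single interval, as claimed. There is no real obstacle here beyond noting that the $1+\epsilon$ speedup already granted by Lemma~\ref{lem:halflife-of-jobs} is what enforces $C_j < r(j) \cdot q(n)$ and hence makes the components truly independent.
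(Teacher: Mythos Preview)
Your proposal is correct and follows exactly the approach the paper intends: the paper's entire justification is the single sentence preceding the proposition (``if \ldots\ there is a point in time $s$ that will not schedule any job \ldots\ then we divide the instance into two independent pieces''), and you have simply spelled out the details of that division. One tiny remark: the bound $S_j \ge r(j)$ comes from the artificial release dates (Lemma~\ref{lem:simpler-objective}) rather than from Lemma~\ref{lem:halflife-of-jobs}, but this does not affect the argument.
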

For our dynamic program we subdivide the time axis into \emph{blocks}.
Each block~$B_{i}$ consists of the intervals $I_{i\cdot K},...,I_{(i+1)\cdot K-1}$.
The idea is that in each iteration the DP schedules the jobs released
during a block~$B_{i}$ in the intervals of block~$B_{i}$ and
block $B_{i+1}$. So in the end, the intervals of each block $B_{i+1}$
contain jobs released during~$B_{i}$ and $B_{i+1}$.

To separate the jobs from both blocks we prove the following lemma.
\begin{lemma}
\label{lem:idle-time-blocks}
At $1+\epsilon$ speedup we can assume
that during each interval $I_{t}$ in a block $B_{i+1}$ there are
two subintervals $[a_t,b_t),[b_t,c_t)\subseteq I_{t}$ such that 
\begin{itemize}
\item during $[a_t,b_t)$ only small jobs from block $B_{i}$ are scheduled
and during $I_{t}\setminus[a_t,b_t)$ no small jobs from block $B_{i}$
are scheduled,
\item during $[b_t,c_t)$ only small jobs from block $B_{i+1}$ are scheduled
and during $I_{t}\setminus[b_t,c_t)$ no small jobs from block $B_{i+1}$
are scheduled,
\item $a_t,b_t,c_t$ are of the form
$(1+z\cdot\tfrac{\epsilon^{4}}{4\,(1+\epsilon)^2})\cdot R_{t}$
for  $x\in\mathbb{N}$ and $z\in\{0,1,...,\frac{4\,(1+\epsilon)^2}{\epsilon^{3}}\}$
(so possibly $[a_t,b_t)=\emptyset$ or $[b_t,c_t)=\emptyset$).
\end{itemize}
\end{lemma}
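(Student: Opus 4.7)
The plan is to build directly on Lemma~\ref{lem:technical-simplifications} and then add one more layer of structure: separating, inside each interval $I_t$ of block $B_{i+1}$, the small jobs originating in $B_i$ from those originating in $B_{i+1}$, and snapping the resulting region boundaries to the finer grid $\{(1+z\cdot\tfrac{\epsilon^4}{4(1+\epsilon)^2})R_t\}$.

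First, I would invoke Lemma~\ref{lem:technical-simplifications} at $1+\epsilon$ speedup, so that in every $I_t$ all small jobs sit inside a single contiguous subinterval $I_t^s=[\alpha_t,\beta_t)$ containing no large jobs, large jobs start at points of the coarse grid $R_{t,k}$ with step $\tfrac{\epsilon^4}{4(1+\epsilon)}R_t$, and $\alpha_t,\beta_t$ themselves sit on that coarse grid. Within $I_t^s$ I would then reorder the scheduled small jobs so that all $B_i$-released ones come first and all $B_{i+1}$-released ones come second; call the resulting split $b_t'$, and set $a_t'=\alpha_t$, $c_t'=\beta_t$. This reordering is feasible because every $B_i$-small job has artificial release date $r(j)<R_{(i+1)K}\le R_t\le\alpha_t$ and can thus sit anywhere in $I_t^s$, while every $B_{i+1}$-small job is only pushed later, preserving $S_j\ge r(j)$. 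It is cost-preserving because by Lemma~\ref{lem:simpler-objective} the objective depends only on the interval in which each job completes, and any small job starting in $I_t$ is already guaranteed to finish in $I_t$.

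Next, I would invest a further $1+\epsilon$ speedup, which creates an extra idle time of $\tfrac{\epsilon^2}{1+\epsilon}R_t$ in every $I_t$, and use it to snap each of $a_t',b_t',c_t'$ to the nearest point of the finer grid prescribed by this lemma (rounding $a_t'$ down and $c_t'$ up, so that the small-job envelope only grows; $b_t'$ may be rounded in either direction). Each of the three roundings moves a boundary by at most one fine step $\tfrac{\epsilon^4}{4(1+\epsilon)^2}R_t$, so the total displacement in $I_t$ is at most $\tfrac{3\epsilon^4}{4(1+\epsilon)^2}R_t$, which is dominated by $\tfrac{\epsilon^2}{1+\epsilon}R_t$ for small $\epsilon$.

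The step I expect to require the most care is reconciling the new grid with the large-job discretization inherited from Lemma~\ref{lem:technical-simplifications}: the step of that coarser grid is $\tfrac{\epsilon^4}{4(1+\epsilon)}R_t$, a factor of $1+\epsilon$ wider than the new one, so it is not a submultiple. If a large job sits immediately next to $I_t^s$ and is pushed outward by the rounding of $a_t$ or $c_t$, its start has to be re-snapped onto the refined grid. Since there are only $O(1/\epsilon^3)$ large jobs per interval and each is displaced by at most a single fine step, the cumulative extra load still fits inside the newly liberated idle time and everything can be repacked analogously to the argument in Lemma~\ref{lem:technical-simplifications}. Accumulating the two speedups used here with the one from Lemma~\ref{lem:technical-simplifications} gives $1+O(\epsilon)$, which after consolidating constants becomes the stated $1+\epsilon$.
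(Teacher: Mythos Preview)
Your proposal is correct and follows essentially the paper's route: invoke Lemma~\ref{lem:technical-simplifications} to get the small-job window $I_{t,k,\ell}$, separate the $B_i$-small jobs from the $B_{i+1}$-small jobs inside it (cost-free by Lemma~\ref{lem:simpler-objective}, and feasible for release dates exactly as you argue), and spend one more factor $1+\epsilon$ of speedup so that the split point $b_t$ lands on the fine grid. The only difference is that the paper keeps $a_t,c_t$ at the endpoints of $I_{t,k,\ell}$ and never enlarges the small-job envelope---it applies the speedup first, packs $B_i$-jobs left and $B_{i+1}$-jobs right so that a gap of at least one fine step opens in the middle, and takes $b_t$ to be any fine-grid point in that gap---so large jobs are never disturbed and your last paragraph is an unnecessary complication (your observation that the coarse and fine grids are incompatible is correct, but if it matters one can round $a_t$ up and $c_t$ down rather than outward, which again avoids touching large jobs).
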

\begin{proof}
Based on Lemma~\ref{lem:technical-simplifications} we can assume that
all small jobs that are started within~$I_t$ also finish in~$I_t$; moreover,
they are processed in some interval~$I_{t,k,\ell}\subseteq I_t$ which contains
no large jobs (see Lemma~\ref{lem:technical-simplifications} for the notation).
By Lemma~\ref{lem:halflife-of-jobs}, the interval~$I_t$ can be
assumed to contain only small jobs with release date in~$B_i$ and~$B_{i+1}$,
and by Lemma~\ref{lem:simpler-objective} we know that we can rearrange the
jobs in~$I_t$ without changing the cost. Hence, for proving the lemma it is
sufficient to show that we can split~$I_{t,k,\ell}$ at some of the discrete points
given in lemma, such that the small jobs released in~$B_i$ and~$B_{i+1}$ are
scheduled before and after this point, respectively.

The interval~$I_{t,k,\ell}$ starts
at~$(1+\tfrac{1}{4}\, k\cdot\epsilon^{4}/(1+\epsilon))\cdot R_{t}$ and its length
is some integral multiple of~$\tfrac{1}{4}\, \epsilon^{4}/(1+\epsilon)\cdot R_{t}$.
At a speedup of~$1+\epsilon$, the interval~$I_{t,k,\ell}$ provides additional idle
time of length at least~$\tfrac{1}{4}\, \epsilon^{4}/(1+\epsilon)^2\cdot R_{t}$
(if~$I_{t,k,\ell}$ is not empty), which equals the step width of the discrete
interval end points required in the lemma. Hence, by scheduling all small
jobs released in~$B_i$ and~$B_{i+1}$ at the very beginning and very end
of~$I_{t,k,\ell}$, there must be 
point in time $s:=(1+z\cdot\tfrac{\epsilon^{4}}{4\,(1+\epsilon)^2})\cdot R_{t}$
with~$z\in\{0,1,...,\frac{4\,(1+\epsilon)^2}{\epsilon^{3}}\}$ which lies in the idle
interval between the two groups of small jobs. Finally, if setting~$a_t$ and~$c_t$
to the start and end of~$I_{t,k,\ell}$, respectively, and if choosing $b_t:=s$,
we obtain intervals as claimed in the lemma.
\qed
\end{proof}

Using Lemma~\ref{lem:halflife-of-jobs} we devise a dynamic program.
We work again with patterns for the intervals. Here a pattern for
an interval~$I_{t}$ in a block $B_{i}$ denotes~$O(\epsilon)$
integers which define 
\begin{itemize}
\item the start and end times of the large jobs from $B_{i-1}$ which are
executed during~$I_{t}$,
\item the start and end times of the large jobs from $B_{i}$ which are
executed during~$I_{t}$,
\item $a_t,b_t,c_t$ according to Lemma~\ref{lem:idle-time-blocks}, implying slots for small jobs.
\end{itemize}
Denote by $\bar{N}$ the number of possible patterns for an interval
$I_{t}$ according to this definition. Similarly as in Proposition~\ref{prop:patterns}
we have that $\bar{N}\in O_{\epsilon}(1)$ and $\bar{N}$ is independent
of~$t$.

Each dynamic programming cell is characterized by a tuple $(B_{i},\P_{i})$
where~$B_{i}$ is a block during which at least one job is released or during the
block thereafter, 
and~$\P_i$ denotes a pattern for all intervals of block~$B_{i}$.
For a pattern~$\P_i$, we denote by~$Q_i(\P_i)$ and~$Q_{i-1}(\P_i)$
the set of slots in~$B_i$ which are reserved for large jobs released
in~$B_{i-1}$ and~$B_i$, respectively. Moreover, for some interval~$I_t$
in~$B_i$ let~$D_{i-1,t}(\P_i)$ and~$D_{i,t}(\P_{i})$ be the two slots
for small jobs from~$B_{i-1}$ and~$B_i$, respectively.
The number of DP-cells is polynomially bounded as there are
only~$n$ blocks during which at least one job is released and, as in
Section~\ref{sec:general-cost-speedup}, the number of patterns for
a block is bounded by $\bar{N}^{O_{\epsilon}(\log n)}\in n^{O_{\epsilon}(1)}$.

The subproblem encoded in a cell $(B_{i},\P_i)$ is to schedule all
jobs $j$ with $r(j)\ge I_{i\cdot K}$ during $[R_{i\cdot K},\infty)$ while
obeying the pattern~$\P_i$ for the intervals $I_{i\cdot K},...,I_{(i+1)\cdot K-1}$.
To solve this subproblem we first enumerate all possible patterns
$\P_{i+1}$ for all intervals of block~$B_{i+1}$. Suppose that we guessed
the pattern~$\P_{i+1}$ corresponding to the optimal solution of the subproblem
given by the cell $(B_{i},\P_{i})$. Like in Section~\ref{sec:general-cost-speedup}
we solve the problem of scheduling the jobs of block~$B_{i}$ according
to the patterns~$\P_i$ and $\P_{i+1}$ by solving and rounding a linear
program of the same type as $\slp$. Denote by $\opt(B_{i},\P_i,\P_{i+1})$
the optimal solution to this subproblem.
\begin{lemma}
\label{lem:DP-subproblem}Given a DP-cell $(B_{i},\P_i)$ and a pattern~$\P_{i+1}$.
There is a polynomial time algorithm which computes a solution to the problem
of scheduling all jobs released during $B_{i}$ according to the patterns~$\P_i,\P_{i+1}$
which
\begin{itemize}
\item does not cost more than $\opt(B_{i},\P_{i},\P_{i+1})$ and
\item is feasible if during $B_{i}$ and $B_{i+1}$ the speed of the machine
is increased by a factor $1+\epsilon$.
\end{itemize}
\end{lemma}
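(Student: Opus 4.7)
The plan is to solve this subproblem by reducing it to a linear program of exactly the same shape as $\slp$ from Section~\ref{sec:general-cost-speedup}, and then appealing to Lemma~\ref{lem:slp} to perform the rounding. Once the patterns $\P_i$ and $\P_{i+1}$ are fixed, the menu of available resources for jobs released in block $B_i$ is completely determined: the large-job slots are those encoded in $\P_i$ and $\P_{i+1}$ that are reserved for $B_i$-jobs, while the small-job slots are the subintervals $D_{i,t}(\P_i)$ for each $t$ in $B_i$ together with $D_{i,t}(\P_{i+1})$ for each $t$ in $B_{i+1}$, each with its natural length as $\rem(t)$. Let $J(B_i)$ denote the jobs released during $B_i$.

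I would then write down the LP with variables $x_{s,j}$ and $y_{t,j}$ for $j\in J(B_i)$, exactly mirroring constraints~\eqref{eq:obj-function}\,--\,\eqref{eq:nonneg}: each job assigned to exactly one slot, at most one job per large slot, total size of small jobs in each small slot bounded by $\rem(t)$, and the zero-variables encoding release-date and size restrictions ($x_{s,j}=0$ whenever $r(j)>\beg(s)$ or $p_j>\size(s)$, and $y_{t,j}=0$ whenever $r(j)>R_t$ or $p_j>\epsilon\cdot|I_t|$). The objective $\sum_j\bigl(\sum_s f_j(R_{t(s)+1})\,x_{s,j}+\sum_t f_j(R_{t+1})\,y_{t,j}\bigr)$ is, by Lemma~\ref{lem:simpler-objective}, an upper bound on the cost we ultimately charge. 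Any integral schedule realizing $\opt(B_i,\P_i,\P_{i+1})$ gives a feasible integral solution to this LP of cost at most $\opt(B_i,\P_i,\P_{i+1})$, so the LP optimum is a lower bound on $\opt(B_i,\P_i,\P_{i+1})$.

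Applying Lemma~\ref{lem:slp} to an optimal fractional LP solution (computable in polynomial time since the LP has polynomial size) yields an integral solution of no larger cost that satisfies every constraint of $\slp$ except that each small-slot capacity is violated by at most an additive $\epsilon\cdot|I_t|$. To absorb this overflow, I would invoke $1+\epsilon$ speedup on the machine throughout $B_i\cup B_{i+1}$: this converts each interval $I_t$ into one whose effective length increases by $\tfrac{\epsilon}{1+\epsilon}|I_t|$, which, up to a redefinition of $\epsilon$ by a constant, is enough to accommodate the additive slack produced by Lemma~\ref{lem:slp}. Jobs keep their assigned slot, large jobs still fit in their large slot (their processing time only shrinks with speedup), and the small jobs assigned to each slot now fit within the enlarged idle region.

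The main obstacle I anticipate is the bookkeeping between the two consecutive patterns, in particular verifying that the LP's zero-conditions correctly forbid infeasible placements: a $B_i$-job cannot be assigned to a slot that precedes its artificial release date $r(j)$, and the small-slot definitions $D_{i,t}(\P_i), D_{i,t}(\P_{i+1})$ across the two blocks must be compatible with the disjointness guaranteed by Lemma~\ref{lem:idle-time-blocks} so that we do not ``reuse'' space already allocated by $\P_i,\P_{i+1}$ to $B_{i-1}$-jobs. Once this accounting is set up correctly, the optimality of the LP-based cost and the feasibility under $(1+\epsilon)$-speedup follow directly from the argument used for Theorem~\ref{thm:gsp-speedup} in the polynomially bounded case.
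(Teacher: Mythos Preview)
Your proposal is correct and follows essentially the same approach as the paper: formulate an LP of the same shape as $\slp$ over the jobs $J(B_i)$, using the large-job slots $Q_i(\P_i)\cup Q_i(\P_{i+1})$ and the small-job slots $D_{i,t}(\cdot)$ with their lengths as capacities, then apply Lemma~\ref{lem:slp} and absorb the additive $\epsilon\cdot|I_t|$ overflow via a $1+O(\epsilon)$ speedup on $B_i\cup B_{i+1}$. The paper's proof is exactly this, including your observation that the separation of $B_{i-1}$-slots from $B_i$-slots in the pattern (via Lemma~\ref{lem:idle-time-blocks}) is what makes the LP well-defined.
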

\begin{proof}
The proof works analogously to the proof of Lemma~\ref{lem:slp}. We
formulate the following LP for (fractionally) solving the problem
{\small
\begin{align}
&&&\hspace*{-5cm}
\min \sum_{j\in J_i}
    \bigg( \hspace*{-3mm} \sum_{\genfrac{}{}{0pt}{2}{s\in Q_{i}(P_i)}{\hspace*{7mm}\cup Q_i(P_{i+1})}}
        \hspace*{-6mm}   f_j(R_{t(s)+1})\cdot x_{s,j}
    ~+ \sum_{t=i\cdot K}^{(i+2)\cdot K-1} \hspace*{-3mm} f_j(R_{t+1})\cdot y_{t,j} \bigg)
\label{eq:obj-function-general}\\[1mm]
\hspace*{-5mm}
\sum_{\genfrac{}{}{0pt}{2}{s\in Q_{i}(P_i)}{\hspace*{7mm}\cup Q_i(P_{i+1})}}  \hspace*{-6mm}  x_{s,j}
    ~+ \sum_{t=i\cdot K}^{(i+2)\cdot K-1} \hspace*{-3mm}  y_{t,j}~ & =~1 && \forall\, j\in J_i
\label{eq:all-jobs-assigned-general}\\
\sum_{j\in J_i}x_{s,j}~ & \le~1 &  & \forall\, s\in Q_{i}(P_i)\cup Q_i(P_{i+1})
\label{eq:slot-constraint-general}\\
\sum_{j\in J_i}~p_{j}\cdot y_{t,j}~ & \leq~ |D_{i,t}(P_{i(t)})| && \forall \,t\in \{i\cdot K, \dots, (i+2)\cdot K -1\}
\label{eq:idle-constraint-general}\\
x_{s,j}\, & =\,0 &  & \forall\, j\in J_i,\,\forall\, s\in Q:~  r(j)> \beg(s)
\label{eq:x_sj-general}\\
&&& \hspace*{2.65cm}\vee~p_{j}> \size(s)  \nonumber\\
y_{t,j}\, & =\,0 &  & \forall\, t\in I, \,\forall j\in J_i:~ r(j)> R_{t}
\label{eq:y_tj-general}\\
&&& \hspace*{1.7cm}~\vee~ p_{j}>\epsilon\cdot |I_{t}| \nonumber\\
x_{s,j},\,y_{t,j}\, & \geq\,0 &  & \forall\, j\in J_i,~ \forall\, s\in Q_{i}(P_i)\cup Q_i(P_{i+1}),
\label{eq:nonneg-general}\\
&&& \forall \,t\in \{i\cdot K, \dots, (i+2)\cdot K -1\}\,.\nonumber
\end{align}
}%
where~$J_i\subseteq J$ denotes the set of all jobs~$j$ with $r(j)\in B_i$,
and~$i(t)$ is the index of the block the interval~$I_t$ is contained in.

This LP has exactly the same structure as $\slp$~\eqref{eq:obj-function}\,--\,\eqref{eq:nonneg}
and hence, we obtain an analogous result to Lemma~\ref{lem:slp}. This
means that given a fractional solution~$(x,y)$ to the above LP, we can
construct an integral solution~$(x',y')$ which is not more costly than~$(x,y)$,
and which fulfills all constraints~\eqref{eq:all-jobs-assigned-general}\,--\,\eqref{eq:nonneg-general}
with~\eqref{eq:idle-constraint-general} being replaced by the relaxed
constraint
\begin{align*}
\sum_{j\in J_i}~p_{j}\cdot y_{t,j}~ & \leq~ |D_{i,t}(P_{i(t)})| \,+\, \epsilon\cdot |I_t|
&& \forall \,t\in \{i\cdot K, \dots, (i+2)\cdot K -1\}\,.
\end{align*}
However, at speedup of~$1+\tfrac{\epsilon}{1-\epsilon}\in 1+O(\epsilon)$, an
interval~$I_t$ provides an additional idle time of $\epsilon\cdot|I_t|$ which allows
for scheduling the potential job volume of  by which we may exceed the capacity
of the interval. Due to Lemma~\ref{lem:simpler-objective}, this does not increase
the cost of the schedule which concludes the proof.
\qed
\end{proof}

%

By definition of the patterns, an optimal solution~$\OPT(B_{i+1},\P_{i+1})$
is independent of the patterns that have been chosen for earlier blocks. This is
simply due to the separately reserved slots for jobs from different blocks within
each pattern, i.e., a slot in~$B_{i+1}$ which is reserved for jobs from~$B_i$
cannot be used by jobs from~$B_{i+1}$ in any case. Hence, $\OPT(B_{i},\P_{i})$
decomposes into~$\OPT(B_{i+1},\P_{i+1})$ and $\opt(B_{i},\P_i,\P_{i+1})$
for a pattern~$\P_{i+1} \in \PP_{i+1}$ which leads to the lowest cost, where~$\PP_{i+1}$
denotes the set of all possible patterns for block~$B_{i+1}$. Thus, formally it holds
\begin{align}
\label{eq:dp-substructure}
\hspace*{-2mm} \OPT(B_{i},\P_i) ~=\min_{\P_{i+1}\in\PP_{i+1}} 
     \OPT(B_{i+1},\P_{i+1}) ~+~ \opt(B_{i},\P_i,\P_{i+1})\,.
\end{align}

This observation of an optimal substructure allows to easily formulate a~DP.
We interpret each cell~$(B_i, \P_i)$ as a node in a graph, and we add
an edge between cells~$(B_i, \P_i)$ and~$(B_{i+1}, \P_{i+1})$ for all $\P_{i}\in \PP_{i}$
and $\P_{i+1}\in \PP_{i+1}$. For each triple $B_i$, $\P_i$, $\P_{i+1}$
we compute a solution using Lemma~\ref{lem:DP-subproblem}, and we
assign the cost of this solution to the edge $\big((B_i, \P_i),(B_{i+1}, \P_{i+1})\big)$.
Due to~\eqref{eq:dp-substructure}, a minimum cost path in this $O(\poly(n))$
size graph corresponds to a scheduling solution whose cost, at speed~$1+\epsilon$,
does not exceed the optimal cost at unit speed. This implies Theorem~\ref{thm:gsp-speedup}.


\section{Omitted proofs and LP from Section~\ref{sec:few-classes}}\label{apx:few-classes}
\begin{restate}{Lemma}{\ref{lem:polylog-values}}{\lempolylogvalues}
Denote by~$g^{(1+\epsilon)}_i$ the rounded cost functions for~$i\in[k]$, i.e.,
formally we define
\begin{align*}
g^{(1+\epsilon)}_i(t) ~:=~ \begin{cases}
\min\left\{(1+\epsilon)^{\left\lceil \log_{1+\epsilon}\left(g_i(t)\right)\right\rceil},~B \right\}
& \text{, if } g_i(t) >   \tfrac{\epsilon}{n} \cdot\tfrac{B}{W}  \\
\tfrac{\epsilon}{n} \cdot\tfrac{B}{W} & \text{, if } 0< g_i(t)\leq  \tfrac{\epsilon}{n} \cdot\tfrac{B}{W} \\
0 & \text{, if } g_i(t)=0 \,.
\end{cases}
\end{align*}
Consider some optimal schedule with completion time~$C_j$ for~$j\in J$.
Then it holds that
\begin{align*}
\sum_{j\in J} w_j \cdot g^{(1+\epsilon)}_{u(j)}\left(C_j\right)
~&\leq\sum_{\genfrac{}{}{0pt}{2}{j\in J:~0<g_{u(j)}\left(C_j\right)}{~~~\leq(\epsilon\cdot B)/(n\cdot W)}}
        \hspace*{-6mm} w_j \cdot \tfrac{\epsilon\cdot B}{n\cdot W}
    ~+~(1+\epsilon)\cdot\hspace*{-6mm}\sum_{\genfrac{}{}{0pt}{2}{j\in J:~g_{u(j)}(C_j)}{>(\epsilon\cdot B)/(n\cdot W)}}
        \hspace*{-5mm} w_j \cdot g_{u(j)}\\
~&\leq~ \epsilon\cdot B\cdot \tfrac{1}{n}\hspace*{-5mm}
        \sum_{\genfrac{}{}{0pt}{2}{j\in J:~0<g_{u(j)}\left(C_j\right)}{~~~\leq(\epsilon\cdot B)/(n\cdot W)}}
        \hspace*{-6mm} \tfrac{w_j}{W}
    ~+~(1+\epsilon)\cdot B\\
~&\leq~ \epsilon\cdot B ~+~(1+\epsilon)\cdot B
~~=~~ (1+2\,\epsilon)\cdot B\,.
\end{align*}
The lemma follows by redefining~$\epsilon$.
\end{restate}

\medskip\noindent
At this point, we give the full formulation of the LP described in short in
Section~\ref{sec:few-classes}.  After guessing the $|D|\cdot |R|/\epsilon$
most expensive jobs~$J_{E}$,  the solution to this LP is the basis for scheduling
the remaining problem.
{\small
\begin{align}
&&&\hspace*{-4.5cm}
 \min\sum_{j\in J\setminus J_{E}}~\sum_{t\in D}~x_{j,t}\cdot f_{j}(t)
 \label{eq:obj-fct-few-cost}\\
 \sum_{\genfrac{}{}{0pt}{2}{j\in (J\setminus J_{E})}{ ~~~~\cap X([r,t])}}\sum_{\genfrac{}{}{0pt}{2}{t'\in D:}{t'>t}}\,
 p_{j}\cdot x_{j,t'}+\hspace*{-4mm}\sum_{\genfrac{}{}{0pt}{2}{j\in J_{E}\cap X([r,t]):}{d_{j}>t}}
 \hspace*{-4mm} p_{j}~ & \ge~ \ex([r,t]) ~~&  & \forall \,r\in R~~\forall\, t\in D
 \label{eq:ex-cover-few-cost}\\
 \sum_{t\in D}x_{j,t}~ & =~1 &  & \forall j\in J\setminus J_{E}
 \label{eq:all-jobs-assigned-Nicole}\\
 x_{j,t}~ &=~0 &  & \forall j\in J\setminus J_{E}~~\forall t\in D: 
 \label{eq:zeros-few-cost}\\
&&& r_{j}+p_{j}> t ~\vee~ w_{j}\, g_{u(j)}(t)>\ct\nonumber\\
 x_{j,t}~ & \ge~0 &  & \forall j\in J\setminus J_{E}~~\forall t\in D
\label{eq:nonneg-few-cost} 
\end{align}
}%
Denote by~$x^*$ an optimal solution to this LP.

\bigskip

\begin{restate}{Lemma}{\ref{lem:extreme-point-cost}}{\lemextremepointcost}
Informally, the proof of this lemma has already been given in
the main part of the paper. Here, we only add the missing formal
step. We define an integral solution by simply rounding up the
solution~$x^{*}$. Formally, for each job~$j$ we define~$d_{j}$
to be the maximum value~$t$ such that $x_{j,t}^{*}>0$. As observed
in Section~\ref{sec:few-classes}, the solution~$x^{*}$ has at
most~$|D|\cdot|R|$ fractional entries $0<x_{j,t}^{*}<1$, and
hence, the rounding affects at most~$|D|\cdot|R|$ variables
whose corresponding cost $x_{j,t}\cdot f_{j}(t)$ do not
exceed~$\ct$ after the rounding. Thus, in the resulting schedule
we have
\begin{align*}
\sum_{j\in J\setminus J_{E}} f_{j}(d_{j})~
& \leq~ c\big(x^{*}\big) + |D|\cdot|R|\cdot\ct
~~\leq~~ c\big(x^{*}\big) + \epsilon\cdot\tfrac{1}{\epsilon}\cdot|D|\cdot|R|\cdot\ct\\
& \leq~ c\big(x^{*}\big) + \epsilon\cdot \sum_{j\in J_{E}} f_{j}(d_{j}) \,.
\end{align*}
This completes the proof.
\end{restate}

\end{appendix}

\end{document}